\date{}
\documentclass[11pt]{article}
\usepackage{amsmath,mathrsfs}
\usepackage{tikz}
\usepackage{tikz-3dplot}
\usetikzlibrary{calc,arrows.meta}

\tdplotsetmaincoords{72}{118}
\usepackage{orcidlink}
\usepackage{authblk}
\usepackage{hyperref}
\usepackage{tikz}
\usetikzlibrary{snakes}
\usetikzlibrary{decorations.text,calc,arrows.meta}
\usepackage[vcentermath]{youngtab}
\usepackage{tkz-euclide}
\usepackage{pgfplots}
\usetikzlibrary{decorations.shapes}
\usetikzlibrary{positioning}
\usetikzlibrary{patterns}
\usetikzlibrary{shapes.geometric}
\usetikzlibrary{decorations.pathmorphing}
\usetikzlibrary{arrows.meta}
\tikzset{snake it/.style={decorate, decoration=snake}}
\usetikzlibrary{arrows,shapes,positioning}
\usetikzlibrary{decorations.markings}
\tikzstyle arrowstyle=[scale=1]
\tikzstyle directed=[postaction={decorate,decoration={markings,mark=at position .65 with {\arrow[arrowstyle]{stealth}}}}]
\tikzstyle reverse directed=[postaction={decorate,decoration={markings,mark=at position .65 with {\arrowreversed[arrowstyle]{stealth};}}}]

\usepackage{amsmath,amsthm,amsfonts,amssymb,amsopn,amscd} 
\usepackage{color}

\def\tilde{\widetilde}

\def\sI{{\mathscr I}}

\def\om{\omega}
\def\Om{\Omega}

\newcommand{\bthm}{\begin{theorem}}
\newcommand{\ethm}{\end{theorem}}

\newcommand{\bprop}{\begin{proposition}}
\newcommand{\eprop}{\end{proposition}}
\newcommand{\bcor}{\begin{corollary}}
\newcommand{\ecor}{\end{corollary}}
\newcommand{\blem}{\begin{lemma}}
\newcommand{\elem}{\end{lemma}}
\newcommand{\econj}{\end{conjecture}}

\def\setminus{\smallsetminus}

\def\RR{{\mathbb R}}
\def\A{{\cal A}}

\def\D{{\cal D}}

\def\M{{\cal M}}
\def\N{{\cal N}}
\def\R{{\cal R}}

\def\H{{\cal H}}

\def\S{{\cal S}}

\def\f{{\varphi}}

\def\PSL{{{\rm PSL}(2,\mathbb R)}}

\def\h{{\frak h}}

\def\S2{S^{1(2)}}

\def\RR{\mathbb R}
\newtheorem{theorem}{Theorem}[section]
\newtheorem{lemma}[theorem]{Lemma}
\newtheorem{conjecture}[theorem]{Conjecture}
\newtheorem{corollary}[theorem]{Corollary}

\newtheorem{proposition}[theorem]{Proposition}

\theoremstyle{definition} 

\theoremstyle{remark} \newtheorem{remark}[theorem]{Remark}

\newcommand{\ben}{\begin{equation}}
\newcommand{\een}{\end{equation}}

\def\setminus{\smallsetminus}

\def\PSL{PSU(1,1)}

\def\SL2{{{\rm SL}(2,\R)}}

\def\PSL2{{{\rm PSL}(2,\Reali)}}

\def\U1{{{\rm V}(1)}}
\def\SU2{{{\rm SV}(2)}}

\def\SU{{{\rm SU}}}

\def\A{{\mathcal A}}

\def\D{{\mathcal D}}

\def\H{{\mathcal H}}

\def\M{{\mathcal M}}
\def\N{{\mathcal N}}

\begin{document}

\author{Stefan Hollands \orcidlink{0000-0001-6627-2808}}
\affil{\small Institute for Theoretical Physics, Leipzig University, Br\" uderstrasse 16, 04103 Leipzig, and MPI-MiS, Inselstrasse 22, 04103, Leipzig, Germany, stefan.hollands@uni-leipzig.de}

\author[2]{Roberto Longo}
\author[2]{Gerardo Morsella}
	\affil[]{\small
	Dipartimento di Matematica, Università di Roma Tor Vergata
via della Ricerca Scientifica 1, I-00133 Roma, Italy
}%

\title{\Huge{A Quantum Dominant Energy Condition}}

\maketitle

\begin{abstract}
We propose a quantum dominant energy condition (QDEC) for the stress tensor in the context of quantum field theory in curved spacetimes. 
A rigorous proof is given for the case of Rindler wedges, and a heuristic discussion about possible generalizations to 
more general geometric setups, including curved spacetime, is provided. In Minkowski spacetime, we establish a connection with state recovery bounds. We 
illustrate the QDEC for coherent states of the free scalar field, where it turns out to be related to the ordinary DEC for the stress 
tensor of the classical solution that defines the coherent state.
\end{abstract}

\section{Introduction}

The dominant energy condition (DEC) states that $T_{\mu\nu} k^\mu u^\nu \ge $ for any future pointing causal (i.e., timelike\footnote{Equivalently, one may restrict to 
a pair of future pointing null vectors $u$ and $k$.} or null) vectors $u=u^\mu \partial_\mu$ and $k=k^\mu \partial_\mu$. 
It implies the null energy condition, $T_{\mu\nu}k^\mu k^\nu \ge 0$ for all null $k$, as a special case, and 
may be interpreted as saying that the energy-momentum flux as seen by an arbitrary observer is causal and future-pointing.
The DEC is used in several important theorems
in general relativity, such as the positive mass theorem, the Hawking topology theorem, various incompleteness theorems, or the conservation theorem, see e.g., \cite{wald} 
as a general reference. For instance, the conservation theorem \cite{HE} states that if the DEC holds in a, say globally hyperbolic, subregion $O$ of a spacetime and if $T_{\mu\nu}$ vanishes near a Cauchy surface of $O$, then it must vanish within all of $O$. In this way, the DEC is related to causality.

In quantum field theory (QFT), on can find for a given spacetime point a sequence of quantum states such that the DEC fails arbitrarily badly at that point \cite{EG}.
However, more non-local conditions may well hold. For instance, in Minkowski spacetime, if $k$ is a constant timelike vector, then the integral of $F$ over a hyperplane orthogonal to $k$ is the generator of spacetime translations which has its spectrum in the closed future lightcone in any QFT. Thus, its expectation value is a future pointing causal vector. 
A more interesting, less non-local relationship, connecting the {\it null} components of the stress energy to the entanglement entropy of a region called the QNEC has been proposed by \cite{B1,B2}. Here we propose such a condition for the DEC. 

\begin{figure}[t]
\centering

\begin{tikzpicture}[
    scale=0.9,
    line cap=round,
    line join=round,
    >=Stealth
]
\def\Rt{4.2}
\def\Rb{1.6}
\path[
    left color=blue!8,
    right color=blue!50,
    shading angle=0,
    draw=none
]

(-4.2,4.6)

.. controls (-4.2,4.90)
           and (-3.0,5.15) ..
(0,5.15)

.. controls (3.0,5.15)
           and (4.2,4.90) ..
(4.2,4.6)

-- (1.6,0)

.. controls (1.0,-0.10)
           and (-1.0,-0.10) ..
(-1.6,0)

-- cycle;
\fill[blue!5]
(0,4.6)
ellipse[x radius=\Rt,y radius=0.55];

\draw[line width=1.2pt]
(0,4.6)
ellipse[x radius=\Rt,y radius=0.55];

\fill[blue!25]
(0,0)
ellipse[x radius=\Rb,y radius=0.25];

\draw[line width=1pt]
(-\Rb,0)
arc[
    start angle=180,
    end angle=360,
    x radius=\Rb,
    y radius=0.25
];

\draw[dashed,line width=.8pt]
(\Rb,0)
arc[
    start angle=0,
    end angle=180,
    x radius=\Rb,
    y radius=0.25
];
\draw[line width=1.2pt]
(-\Rt,4.6)--(-\Rb,0);

\draw[line width=1.2pt]
(\Rt,4.6)--(\Rb,0);
\coordinate (L) at (-2.83,2.18);
\coordinate (R) at ( 2.83,2.18);

\draw[line width=1.6pt,color=red]

(L)

.. controls (-2.35,2.80)
           and (-1.80,1.65) ..
(-1.00,2.25)

.. controls (-0.25,3.00)
           and (0.35,1.55) ..
(1.10,2.30)

.. controls (1.80,3.05)
           and (2.35,1.75) ..
(R);

\draw[dashed,line width=.7pt,color=red]

(R)

.. controls (2.20,3.10)
           and (1.30,3.20) ..
(0.35,3.00)

.. controls (-0.55,2.80)
           and (-1.50,3.25) ..
(-2.25,2.95)

.. controls (-2.70,2.75)
           and (-2.95,2.45) ..
(L);

\node[left] at (-3,2.1) {{\color{red} $C$}};

\coordinate (P) at (-0.25,2.41);

\draw[
    very thick,
    ->,
    line width=1.1pt
]
(P)--++(0,1.3);

\draw[
    very thick,
    ->,
    line width=1.1pt
]
(P)--++(0.8,0.75);

\draw[line width=1.1pt]
($(P)+(0,0.42)$)
--
($(P)+(0.22,0.42)$)
--
($(P)+(0.22,0.20)$);

\node[right]
at ($(P)+(0.0,1.3)$)
{$n$};

\coordinate (K0) at (0.45,-0.12);

\draw[
    very thick,
    ->,
    line width=1.2pt
]
(K0)--(1.45,4.45);

\node[right]
at (1.45,4.45)
{$k$};

\end{tikzpicture}

\caption{
 \cite{ChatGPT}
The cut $C$ with outgoing null sheet tangent to the vector $k=k^\mu \partial_\mu$.
$n=n^{\mu\nu}\partial_\mu \wedge \partial_\nu$ is the bi-normal attached to the cut.
}

\label{fig:QDEC}

\end{figure}

Our quantum DEC (QDEC) is an information-energy tradeoff relation. 
It states that if $C$ is an entangling cut, i.e., the \textcolor{black}{boundary} of a sufficiently regular domain within a Cauchy slice, if $k$ is the up to normalization unique outgoing, future directed null 
vector field orthogonal to $C$, then 
\ben
\label{qdec}
\left( \langle T^{\mu\nu}\rangle - \frac{\hbar}{2\pi} \, n^{\mu\alpha} n^{\nu\beta} \, \nabla_\alpha \frac{\delta}{\delta C^\beta} S_{\rm EE} \right) k_\nu u_\mu \ge 0
\een
for any future directed causal vector $u$ orthogonal to $C$, and 
for any sufficiently regular quantum state $\Phi$. Here, $\langle X \rangle = (\Phi, X\Phi)$ is the expectation value of an operator, 
$S_{\rm EE}[C]$ is the entanglement entropy of the domain enclosed by the cut $C$ in the state $\Phi$. $n=n^{\mu\nu}\partial_\mu \wedge \partial_\nu$ is the bi-normal\footnote{Our normalizations are 
such that $n^{\mu\nu} n_{\mu\nu} = -2$.} to the cut $C$, see fig. \ref{fig:QDEC}.  The symbol $\delta/\delta C$ indicates 
the variational derivative with respect to changes in the location of $C$, see \eqref{vardef}. 

We give evidence for the validity of the QDEC in the case of a half space on a time slice in Minkowski spacetime, such that $k$ is tangent to the future Rindler horizon in the causal wedge forming the causal envelope of this half-space. In the case of $d=2$ spacetime dimensions, we state the QDEC as a certain convexity condition of the 
relative entropy $S(\Phi |  \! | \Omega)$ between $\Phi$ and the vacuum state $\Omega$ of the QFT, when viewed as a function of the entangling cut, i.e. simply the point where 
the future and past Rindler horizons meet. In the special case of coherent states of a free real Klein-Gordon field, we show that this form of the QDEC simply follows from classical DEC of the stress energy tensor of the classical solution underlying the coherent state, see sec. \ref{coherent}.

Our formulation of the QDEC in terms of the relative entropy avoids the explicit mentioning of the entanglement entropy---which undifferentiated is a formally infinite quantity in QFT---and thereby sets a reasonable setting for a rigorous proof for a wide class of states that is, e.g., dense in the Hilbert space. 
Our proof is based on a variational principle \cite{W1} that has been adapted in rigorous proofs of the quantum null energy condition (QNEC) \cite{CF18,HL25b}. These proofs, as well as those in the present paper, use the theory of half-sided modular inclusions \cite{B1,B2,Wies1,AZ05,Florig}, which gives an operator that formally equals a suitable null-component of the stress energy tensor smeared over a horizon \cite{Markov}. 

This formal relationship leads to the version of \eqref{qdec} in $d=2$. In that case, $C$ is just a point, the variational derivative is just the ordinary derivative with respect to this point, and the corresponding relation was already conjectured by \cite[Eq. (39)]{W1}. By making a particular choice of the 
vectors, we have, e.g., $\langle T_{00} \rangle \ge (\hbar/2\pi) \partial_1^2 S_{\rm EE}$ as a special case in $d=2$.
In the case of $d>2$ dimensions, our result is not stated as a simple convexity condition for the relative entropy, but, as we outline, 
it still heuristically gives \eqref{qdec} in the case of Rindler horizons. 

At present, it is unclear to us whether the QDEC \eqref{qdec} should be expected for cuts more general than the edge of a Rindler wedge in Minkowski spacetime, or even curved spacetimes. We postpone a discussion of some of the issues that arise to sec. \ref{extensions} but remark that there will most likely be interesting instances where the condition applies, such as idealized wormholes, or degenerate Killing horizons which are present e.g., in the extremal Kerr-Newman-(A)dS black hole spacetimes. However, regardless of these considerations, already the QDEC for Rindler wedges in Minkowski spacetime is interesting, and implies, in particular, a non-trivial inequality between the expected energy of a quantum state on a Rindler wedge, and the amount of recoverable information of its restriction to a sub-wedge, i.e., after some radiation may have leaked out, see sec. \ref{staterecovery}.

\medskip
\noindent
{\bf Note added in proof:} While we were working on this project, we were notified by Aron Wall and Zihan Yan that they have been considering a QDEC independently. We thank them for correspondence.  

\section{Notations and preliminaries}

\subsection{Relative entropy and relative modular theory}
\label{modthbas} 

In this paper, von Neumann algebras are denoted by caligraphic letters such as $\M$, and are understood to act on a on a Hilbert space $\H$. 
$\M'$ denotes the commutant, i.e., the von Neumann algebra of all bounded operators on $\H$ commuting with every operator from $\M$.
If we have two cyclic and separating  vectors $\Om,\Phi\in\H$ for $\M$, then following \cite{Ar76,Ar77}, 
we define the relative Tomita's operators on $\H$ as the closures of
\[
S_{\Om,\Phi; \M} : m\Phi \mapsto m^*\Om\, , \quad m\in \M\, ,
\]
\ben\label{SF}
S_{\Om,\Phi;\M'}: m'\Phi \mapsto {m'}^*\Om\, , \quad m'\in \M'\, .
\een
Following \cite{Ar76,Ar77}, we take the closures of these operators and consider their polar decomposition, 
yielding the relative modular operators $(\Delta)$ and conjugations $(J)$:
\[
S_{\Om,\Phi;\M} = J_{\Om,\Phi}\Delta^{1/2}_{\Om,\Phi}\, ,\qquad S_{\Om,\Phi;\M'} = J'_{\Om,\Phi}\Delta'^{1/2}_{\Om,\Phi}\, ,
\]
where $\Delta_{\Om,\Phi}, \Delta'_{\Om,\Phi}$ are self-adjoint and positive, and 
$J_{\Om,\Phi}, J'_{\Om,\Phi}$ are anti-unitary.
When the two vectors are identical, $\Phi=\Om$, we  write $\Delta_{\Om,\Om} = \Delta_\Om$, etc.
The two vectors $\Omega, \Phi$ define normal state functionals  $\f = (\Phi, \cdot \Phi)$, $\om = (\Om, \cdot \Om)$ on $\M$. 
Their associated Connes-cocycle \cite{C73}, $u_s$, is related to the relative modular operators by \cite{Ar76,Ar77}
\ben\label{uDD}
u_s 
= \Delta^{is}_{\Om,\Phi}\Delta^{-is}_{\Phi} =
\Delta^{is}_{\Om}\Delta^{-is}_{\Phi,\Om}
\, ,
\een
and similarly for $\M'$ and its modular operators. 

It is known \cite{C73} that $u_s$ respectively $u_s'$ are unitary operators from $\M$ respectively $\M'$ for all $s \in \RR$.
These constructions may be generalized when $\Phi,\Omega$ are not cyclic and separating, 
see \cite{Ar77}, \cite[App. C]{AM82}. For simplicity, we will avoid such considerations in this paper by generally assuming that $\Phi,\Omega$ are 
cyclic and separating.

In the context of general von Neumann algebras as appropriate for us, the relative entropy is defined by \cite{Ar76,Ar77} 
\ben
\label{Sdef}
S(\Phi|\!| \Om)_\M =  - (\Phi , \log \Delta_{\Om,\Phi}\Phi). 
\een
and depends on $\M$ through the definition of the relative modular operator. The work by \cite{Ar76,Ar77} shows that
$S(\Phi|\!| \Om)_\M \equiv S(\varphi|\!| \omega)_\M$, where we intend to mean that $S$ can be regarded as a
functional of  $\f = (\Phi, \cdot \Phi)$, $\om = (\Om, \cdot \Om)$, i.e., the normal state functionals on $\M$ associated with the vectors $\Phi,\Om \in \H$.
A similar statement holds when replacing $\M$ by its commutant. See e.g., \cite{OP} for further discussion of $S$.

\subsection{Half-sided modular inclusions and ant formula}
\label{hsm}

Let $\N \subset \M$ be a proper inclusion of von Neumann algebras acting on $\H$ and suppose the following:
\begin{enumerate}
\item We have a cyclic and separating vector $\Omega \in \H$ for both $\M$ and $\N$ . 
\item We have that $\Delta^{-it}_\Omega \N \Delta^{it}_\Omega \subset \N$ for any $t \ge 0$, with $\Delta_\Omega$ the modular operator 
associated with $\M$. 
\end{enumerate}
The structure $(\N \subset \M, \Omega)$ subject to these conditions is called a half-sided modular inclusion (HSMI) \cite{Wies1}. 

Wiesbrock's theorem \cite{Wies1,AZ05,Florig} for HSMIs $(\N\subset\M, \Omega)$ states that
there is a strongly continuous 1-parameter group of unitary operators $U(a), a \in \RR$ on $\H$ such that
 \begin{enumerate}
 \item $U(a)=e^{iaP}$ with self-adjoint $P$ such that  ${\rm spec} P  \subset [0,\infty)$.
 \item $U(a) \Omega = \Omega$ for all $a \in \RR$.
 \item $\M(a) := U(a) \M U(a)^* \subset \M$ and $\M(-a)'=U(-a) \M' U(-a)^* \subset \M'$ for all $a \ge 0$.
 \item $\M(1) = \N$.
 \item If $b > a$ , then $(\M(b) \subset \M(a),\Omega)$ are HSMIs, and 
 if $a>b$  $(\M(-a)' \subset \M(-b)',\Omega)$ are  HSMIs.
 \item $\Delta^{-it}_{\Omega} U(a) \Delta^{it}_{\Omega} = U(e^{2\pi t} a)$ for all $t,a \in \RR$, 
 or equivalently $\Delta^{-it}_{\Omega} P \Delta^{it}_{\Omega} = e^{2\pi t} P$ for all $t \in \RR$, on the domain $\D(P)$ of $P$, i.e., 
 $\D(P)$ is in particular invariant under $\Delta^{it}_{\Omega}$. As a consequence, $\Delta_\Omega^{-it} \M(a) \Delta_\Omega^{it} =\M(e^{2\pi t}a)$.
 \item $J_{\Omega} U(a) J_{\Omega} = U(-a)$ for all $a \in \RR$.
 \end{enumerate}
 Given $\Phi \in \H$, one may consider the relative entropies
with respect to $\M(a)$ or $\M(a)'$ 
\ben
\label{Sadef}
S(a):=S(\Phi|\!| \Om)_{\M(a)}, \quad
\bar S(a):=S(\Phi|\!| \Om)_{\M(a)'}
\een
as functions of $a$.

Suppose that the derivative $\partial S(a)$ of this function exists for some $a \in \R$, 
$\Phi \in \D(P) \cap \D(\log \Delta'_a)$, 
and suppose that $u_{s}' \Phi \in \D(P)$ for sufficiently large $s$. Here, 
$u_s' $ is the Connes-cocycle \eqref{uDD} and 
 $\Delta'_a$ is the relative modular operator (both for  $\M(a)'$ and $\Phi,\Omega$). Then we have the ``ant formula''
\cite{HL25,CF18}
\ben
\label{antleft}
-\partial S(a) = 2\pi \inf_{u' \in \M(a)'} (u'\Phi, P\,  u'\Phi) = 2\pi \lim_{s \to +\infty} (u'_s\Phi, P \, u'_s\Phi),
\een
\textcolor{black}{where $u' \in \M(a)'$ is unitary.} Analogously, suppose that $\partial \bar S(a)$ 
exists, 
$\Phi \in \D(P) \cap \D(\log \Delta_a)$, 
and $u_{s} \Phi \in \D(P)$ for a sufficiently small $s$. Now, $u_s$ is the Connes-cocycle \eqref{uDD} for $\M(a)$ and $\Phi,\Omega$. 
Since this situation is related to the previous one by the modular conjugation $J$, one has \cite{HL25,CF18}
\ben
\label{antright}
+\partial \bar S(a) = 2\pi \inf_{u \in \M(a)\,} (u \, \Phi, P \, u\, \Phi) = 2\pi \lim_{s \to -\infty} (u_s\Phi, P \, u_s\Phi) ,
\een
\textcolor{black}{where $u \in \M(a)$ is unitary.} 

The interpretation of the limit on the right side of \eqref{antright} is as follows in a context where there are non-trivial 
$m \in \M(a) \cap \M(b)', a<b$.
\ben
\label{ergodic}
\lim_{s \to -\infty} (u_s\Phi, m u_s\Phi) 
\lim_{s \to -\infty} (\Delta^{is}_{\Omega} \Phi, m \Delta^{is}_{\Omega} \Phi) = (\Omega, m \Omega) \|\Phi \|^2, 
\een
using the ergodic nature of the modular flow $\Delta_\Omega^{is}$ of $(\M(a), \Omega)$ \cite{Roberts} in the last step and the definition of $u_s$ and properties of the modular 
relative operators in the first. 

Thus, with respect to $\M(a) \cap \M(b)'$, the state 
functional $\varphi_s = (u_s\Phi, \, . \, u_s\Phi)$ tends to a multiple of the vacuum functional $\omega = (\Omega, \, . \, \Omega)$. 
In this sense, 
$u_s$ ``scales away'' the departure of $\Phi$ from the vacuum $\Omega$. Similar remarks apply to 
the expression on the right side of the ant formula \eqref{antleft} for the commutant. 

\section{QDEC in $d=2$}

\subsection{Formulation and proof}

For simplicity, we first consider such theories in $d=2$ dimensions. Null coordinates on $\RR^2$ are defined as $x^\pm = (x^0 \pm x^1)/\sqrt{2}$. The Minkowski metric 
$\eta_{\mu\nu}, \mu, \nu = 0,1,$ is given by $\eta_{11} = -\eta_{00} = 1, \eta_{01} = \eta_{10} = 0$, or equivalently by $\eta_{+-} = \eta_{-+} = -1, \eta_{++}=\eta_{--}=0$. 
A (right) wedge with apex $x=(x^0,x^1) \in \RR^2$ is the domain
\ben
\label{Wdef0}
W(x) = \{ y \in \RR^2 \, : \, y^+>x^+, y^-<x^-\}. 
\een
The null components of the energy momentum operator are denoted by $(P_+, P_-)$, and the vacuum state is denoted by $\Omega$. Then $P_\pm$ are self-adjoint, positive operators 
and $\Omega$ is the unique eigenvector with eigenvalue $0$. They give a strongly continuous representation $U(a) = e^{ia^+ P_+ + ia^- P_-}$ acting geometrically 
on the Haag-Kastler net \cite{haag_2} $\{ \A(O) \}$ defining the QFT by
\ben
U(a) \A(O) U(a)^* = \A(O + a). 
\een
Then, assuming a sufficiently strong version of the Araki-Kastler axioms \cite{haag_2} as we will from now, $(\N:=\A(W(x^+=1,x^-=0)) \subset \M:= \A(W(0)), \Omega)$ is a 
HSMI such that $P = P_+$ \cite{Bor1,Bor2}. Likewise $(\N:=\A(W(x^+=0,x^-=1))' \subset \M:=\A(W(0))', \Omega)$ is a 
HSMI such that $P=P_-$. 

To the first of these HSMIs, we now apply \eqref{antleft}, whereas to the second we apply \eqref{antright}, and then we add these two formulas 
with non-negative weights $\tilde r^+, \tilde r^- \ge 0$, and we put $r^+ = \tilde r^+, r^-=-\tilde r^-$, so that 
\ben
\label{dual}
r^\mu = \epsilon^\mu{}_\nu \tilde r^\nu,
\een
$\epsilon_{\mu\nu}$ is the up to normalization unique antisymmetric tensor on $\RR^2$. Our normalization is $\epsilon_{-+}=-1$, 
so that $\tilde r$ is a future pointing timelike or null vector and $r$ is a right pointing spacelike or null vector. Then we obtain
\ben
\label{antboth}
\begin{split}
&-(r^+ \partial_+ + r^- \partial_-) S(x)\\ 
=& 2\pi \inf \Big\{ \tilde r^+ (u_+'\Phi, P_+ u_+'\Phi) + \tilde r^- (u_-'\Phi, P_- u_-'\Phi) :
u_\pm' \in \A(W(x))', \, u'_\pm \, {\rm unitary} \Big\} ,
\end{split}
\een
where 
\ben
\label{Sx}
S(x) := S(\Phi |\! | \Omega)_{\A(W(x))}. 
\een
The formula \eqref{antboth} holds under the assumptions that the derivatives exist, that 
$\Phi \in \D(P_+) \cap \D(P_-) \cap \D(\log \Delta_{\Phi,\Omega;\A(W(x))'})$, $u'_t \Phi \in \D(P_+) \cap \D(P_-)$ for all $t \in \RR$, where $u_t'$ is the Connes-cocycle \eqref{uDD} for $\A(W(x))'$. Let us now suppose these assumptions 
hold for all $x$ in a fixed wedge, say $x \in W(0)$. 

If $s \in W(0)$, i.e., $s$ is a spacelike vector pointing to the right, then obviously $\A(W(x+s)) \subset \A(W(x))$. As a consequence, 
the variational nature of \eqref{antboth} gives 
\ben
\label{antboth1}
-r^\mu \partial_\mu S(x) \ge  -r^\mu \partial_\mu S(x+s),
\een
for all states $\Phi$ satisfying the above domain assumptions. Integrating this inequality along the spacelike line with tangent $r$ and starting point $x$ (lying within $W(x)$), 
we thereby get the following proposition.

\begin{proposition}
\label{propant}
Suppose that that $S(x)$ \eqref{Sx}  is once differentiable for all $x \in W(0)$, and that for all such $x$, 
$\Phi \in \D(P_+) \cap \D(P_-) \cap \D(\log \Delta_{\Phi,\Omega;\A(W(x))'})$, $u'_t \Phi \in \D(P_+) \cap \D(P_-)$ for all $t \in \RR$, where $u_t'$ is the Connes-cocycle \eqref{uDD} for $\A(W(x))'$. Then
\ben
\label{Sconv}
S(x+r+s)+S(x) \ge S(x+r)+S(x+s) \qquad \text{for \ all $x,r,s \in W(0)$}.
\een
\end{proposition}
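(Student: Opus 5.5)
The plan is to derive \eqref{Sconv} from the monotonicity of directional derivatives \eqref{antboth1}, integrated along a segment. Fix $x,r,s\in W(0)$ and consider the function
\[
g(t) := S(x+s+tr) - S(x+tr), \qquad t\in[0,1].
\]
The claim \eqref{Sconv} is exactly $g(1)\ge g(0)$. Rearranged, it reads $S(x+r+s)-S(x+s)\ge S(x+r)-S(x)$.

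The first step is to justify \eqref{antboth1} for the given $r$. Any $r\in W(0)$ has $r^+>0$ and $r^-<0$. Setting $\tilde r^+=r^+\ge 0$ and $\tilde r^-=-r^-\ge 0$, we can write $r$ in the form \eqref{dual} with $\tilde r$ future pointing. Then \eqref{antboth} expresses $-r^\mu\partial_\mu S(y)$ as $2\pi$ times an infimum over unitaries $u'_\pm\in\A(W(y))'$ of a functional that does not depend on $y$.

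For $y' = y+s$ with $s\in W(0)$ we have $W(y+s)\subset W(y)$. Hence $\A(W(y+s))\subset\A(W(y))$ and $\A(W(y))'\subset\A(W(y+s))'$. The infimum for $y+s$ therefore runs over a larger set, which gives
\[
-r^\mu\partial_\mu S(y)\ \ge\ -r^\mu\partial_\mu S(y+s).
\]
This holds for every $y$ where the hypotheses hold. Since $W(0)$ is a convex cone, $y=x+tr\in W(0)$ and $y+s\in W(0)$ for all $t\in[0,1]$, so the hypotheses of the proposition apply at both points.

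The second step is to integrate. By the assumed differentiability, $t\mapsto S(x+tr)$ is differentiable on $[0,1]$ with derivative $r^\mu\partial_\mu S(x+tr)$, and likewise for $S(x+s+tr)$. The inequality above therefore says $g'(t)\ge 0$ for all $t\in[0,1]$. A function on an interval with nonnegative derivative everywhere is nondecreasing by the mean value theorem, so no integrability of the derivative is needed. Thus $g(1)\ge g(0)$, which is \eqref{Sconv}.

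The main obstacle is the first step. Formula \eqref{antboth} comes from adding \eqref{antleft} and \eqref{antright}, and those formulas control the derivatives in the $+$ and $-$ null directions separately. Writing the directional derivative along $r$ as $r^+\partial_+ + r^-\partial_-$ requires genuine (total) differentiability of $S$, not just existence of the partial derivatives. I would take the hypothesis ``once differentiable'' in that sense.

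One must also make sure the sign bookkeeping matches the orientations of the two HSMIs:
\begin{itemize}
\item $P_+$ translates $W(x)$ into itself for positive $a^+$, so \eqref{antleft} gives $-\partial_+S$.
\item The commutant inclusion uses $P_-$, so \eqref{antright} gives $+\partial_- \bar S$. Because the Haag duality $\A(W(x))'=\A(W(x)')$ was used, this is reinterpreted as a derivative of $S$ in the $-$ direction with the sign absorbed by $r^-=-\tilde r^-$.
\end{itemize}
Once \eqref{antboth} is accepted as stated, the rest is monotonicity of the infimum and the elementary one-variable argument.
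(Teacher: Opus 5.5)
Your proposal is correct and is essentially the paper's argument: the infimum in \eqref{antboth} runs over unitaries in $\A(W(x))' \subset \A(W(x+s))'$, which gives $-r^\mu\partial_\mu S(x) \ge -r^\mu\partial_\mu S(x+s)$. Integrating this along $t\mapsto x+tr$, which you make precise via $g(t)$, convexity of the open cone $W(0)$ and the mean value theorem, then yields \eqref{Sconv}. One minor point: identifying the relative entropy for the commutant inclusion with $S$ needs only the bicommutant theorem, $(U(a)\A(W(0))'U(a)^*)' = \A(W(0)+a)$, not Haag duality.
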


\begin{remark}
Note that if $S(x)$ is twice differentiable in $x\in W(0)$, the condition \eqref{Sconv} may be restated equivalently as 
\ben
\label{QDECdiff}
r^\mu s^\nu \partial_\mu \partial_\nu S(x) \ge 0 \qquad \text{forall $x,r,s \in W(0)$.}
\een
\end{remark}

\textcolor{black}{Following \cite[Sec. 6]{HL25b},} we now generalize this proposition to vectors $\Phi$ in a more natural domain. For this, we recall that $M' \hat\in \A(W(0))'$ means that $M'=A'u'$
where $A'$ is a self-adjoint operator such that its spectral projections for any bounded interval belong to $\A(W(0))'$, and such that $u' \in \A(W(0))'$ is unitary. (In particular, 
all $m' \in \A(W(0))'$ \textcolor{black}{satisfy such a condition}.)

\begin{theorem}
\label{QDEC}
Let $\Phi = M'\Omega \in \H$ in $S(x)$ \eqref{Sx}, where $M' \hat\in \A(W(0))'$. Then \eqref{Sconv} holds for any $x,r,s \in W(0)$.
\end{theorem}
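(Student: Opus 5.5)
The plan is to follow the strategy of \cite[Sec. 6]{HL25b}: first prove \eqref{Sconv} for a dense, well-behaved class of vectors where Proposition \ref{propant} applies, and then pass to general $\Phi = M'\Omega$ with $M' \hat\in \A(W(0))'$ by approximation and lower semicontinuity arguments. I would not try to check the differentiability and domain hypotheses of Proposition \ref{propant} for arbitrary $\Phi$. Instead I would avoid them by working directly with the variational (infimum) form of the ant formula, which is monotone under inclusion without any smoothness.

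\textbf{Step 1 (regular case).} Take $\Phi = m'\Omega$ with $m'$ in a suitable $^*$-subalgebra of $\A(W(0))'$. For instance, choose elements smoothed by convolution with $U(f)=\int f(a)U(a)\,da$ along translations into $W(0)'$ and by the modular group, so that $\Phi$ is analytic for $\Delta_\Omega$ and lies in $\D(P_+)\cap\D(P_-)$. For such vectors I would show, as in \cite{HL25b}, that $S(x)$ is finite and once differentiable for $x \in W(0)$ and that the Connes cocycle vectors $u'_t\Phi$ stay in $\D(P_\pm)$. The natural route is to express $S(x)$ through the relative modular Hamiltonian and use the HSMI scaling relation (item 6) to control $u'_t\Phi$.

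Then Proposition \ref{propant} gives \eqref{Sconv} for these $\Phi$. The integration step behind it is the following. Fix $x,r,s$ and set $g(\lambda) = S(x+\lambda r) - S(x+s+\lambda r)$. By \eqref{antboth1}, applied at the points $x+\lambda r$ with the shift $s$, we get $g' \le 0$. Hence $g(1)\le g(0)$, which is exactly \eqref{Sconv}. In $d=2$ every $r\in W(0)$ has the form used in \eqref{dual}, so the directional derivative is covered by \eqref{antboth}.

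\textbf{Step 2 (approximation).} For general $M' = A'u'$, write $\Phi_n = A'_n u'\Omega$, where $A'_n = A' E_n$ is the truncation to the spectral projection $E_n$ of $|A'|\le n$. Each $\Phi_n$ has the form $m'\Omega$ with $m'$ bounded in $\A(W(0))'$, and $\Phi_n\to\Phi$ in norm. Next I would regularize $m'$ further into the class of Step 1, obtaining $\Phi_{n,k}\to\Phi_n$.

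The key point is that, for fixed $\Omega$, the relative entropy $S(\cdot\,|\!|\,\Omega)_{\A(W(y))}$ is only lower semicontinuous under norm convergence, while \eqref{Sconv} has $S$ on both sides. So I need actual convergence $S_{\Phi_{n,k}}(y)\to S_\Phi(y)$ at the four points $y\in\{x,\,x+r,\,x+s,\,x+r+s\}$. For this I would use the structure $\Phi = M'\Omega$ with $M'$ affiliated to the commutant. For such vectors, the state on $\A(W(y))\subset\A(W(0))$ is $\omega(M'^*\,\cdot\,M')$, and Araki's formula gives
\[
S(\Phi|\!|\Omega)_{\A(W(y))} = -(\Omega, \ldots)
\]
in terms of $\Delta$ for the commutant, which behaves continuously under spectral truncation by monotone convergence. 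The regularizations via $U(f)$ and modular smearing I would handle by joint convexity: $S$ of an average of translated or modularly transported states is bounded by the average of $S$'s. This bound, combined with lower semicontinuity, yields convergence.

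\textbf{Main obstacle.} I expect the hardest part to be exactly this two-sided continuity of $S(x)$ under the approximations, including allowing $S=+\infty$. If some terms are infinite, \eqref{Sconv} should be read with the convention that the inequality holds trivially or in the extended reals, and this needs to be checked. The first thing I would try is to reproduce the argument of \cite[Sec. 6]{HL25b}, which established the analogous QNEC convexity for this same class of vectors, replacing the single null direction there by the two-sided combination \eqref{antboth}.
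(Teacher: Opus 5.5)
There is a genuine gap. Your overall plan, a regular class followed by approximation, is sound, and so is your integration step deriving \eqref{Sconv} from \eqref{antboth1}. The problem is Step~1.

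\textbf{Step 1 does not establish the hypotheses of Proposition~\ref{propant}.} Smoothing the \emph{vector} $\Phi=m'\Omega$ does not give those hypotheses. Here smoothing means convolving $m'$ with $U(f)$ or with the modular group, so that $\Phi$ is analytic and lies in $\D(P_\pm)$. The hypotheses in question are:
\begin{itemize}
\item differentiability of $S$ on $W(0)$;
\item $\Phi\in\D(\log\Delta_{\Phi,\Omega;\A(W(x))'})$;
\item $u'_t\Phi\in\D(P_\pm)$ for all $t$.
\end{itemize}
The cocycle $u'_t$ is built from the relative modular operators of the pair $(\Phi,\Omega)$ on $\A(W(x))'$, and these depend nonlinearly on $\Phi$. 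The scaling relation (item 6) only controls $\Delta_\Omega$, so regularity of $\Phi$ itself tells you nothing about $u'_t\Phi$.

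The paper regularizes the \emph{state} instead. For $\varphi$ with $c^{-1}\varphi\le\omega\le c\varphi$ on $\A(W(0))$, it sets $\varphi_\lambda=\varphi\circ T^+_\lambda T^-_\lambda$ with the channels \eqref{Tladef}. For exactly these states, via their natural-cone representatives on each $\A(W(x))$, \cite{HL25b} supplies the domain conditions, $C^1$-differentiability of $S_\lambda$, and the ant formula.

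Two-sided comparability is part of that input, and your truncations $A'E_nu'$ lack it. They are not invertible, so the induced states need not even be faithful, and the cyclic--separating setup breaks down. The paper's approximants are $(\epsilon 1+|M'|^2e'_n)^{1/2}\Omega$, i.e.\ $m'\Omega$ with $m'$ invertible.

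\textbf{The state-level regularization also removes your ``main obstacle''.} Because $T^\pm_\lambda$ average over null translations \emph{into} the wedge, they map every $\A(W(y))$, $y\in W(0)$, into itself and leave $\omega$ invariant. Monotonicity of relative entropy then gives $S_\lambda(y)\le S(y)$ at every $y$ simultaneously. Lower semicontinuity gives $S(y)\le\liminf_{\lambda\to0}S_\lambda(y)$. Together these yield convergence at all four points.

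Your mechanism does not deliver this upper bound:
\begin{itemize}
\item $\int f(a)U(a)m'\Omega\,da$ is a coherent superposition, not a mixture of states, so joint convexity does not apply to it.
\item Smearing with the modular group of $\A(W(0))$ (boosts) moves $W(y)$ for $y\neq0$. Even a state-level average is then bounded only by an average of $S$ over the boost orbit of $y$, not by $S(y)$.
\item The appeal to ``monotone convergence'' under spectral truncation is unsupported. The relative modular operator changes with $n$, and $S$ is not monotone in its first argument.
\end{itemize}
The paper handles the limits $n\to\infty$, $\epsilon\to0$ by the argument of \cite{HL25b} (proof of Thm.~6.3), applied to the invertible approximants above.
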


\begin{proof}
First let  $\varphi$ be a state on $\A(W(0))$ such that, for some $c>0$, we have that $c^{-1}\varphi \le \omega \le c\varphi$, where $\omega = (\Om, \ . \ \Om)$ is the 
state on $\A(W(0))$ induced by the vacuum vector. 
Let $T_\lambda^\pm, \lambda>0$ be the normal, unital, completely positive linear maps (i.e., channels)  $\A(W(0)) \to \A(W(0))$ defined by 
\ben
\label{Tladef}
T_\lambda^\pm(m) = \lambda \int_{(0,\infty)} e^{-\lambda a} e^{\pm iaP_\pm} m e^{\mp iaP_\pm} \, da. 
\een
Then $[T_\lambda^+, T_\lambda^-]=0$, and $T_\lambda^\pm$ restrict to channels on $\A(W(x))$ for each $x \in W(0)$. 
Define $\varphi_\lambda = \varphi \circ T_\lambda^+T^-_\lambda$. Then $c^{-1}\varphi_\lambda \le \omega \le c\varphi_\lambda$
as states on $\A(W(x))$. Let $\Phi_{x,\lambda}$ be the unique vector in the natural 
cone of $\A(W(x))$ defined by $\Omega$ representing this state. Then by a minor generalization of 
\cite[Thm. A.1]{HL25b}, we have $\Phi_{\lambda,x} \in \D(P_+) \cap \D(P_-) \cap \D(\log \Delta_{\Phi,\Omega;\A(W(x))'})$
and $u'_{\lambda, x, t} \Phi_{\lambda,x} \in \D(P_+) \cap \D(P_-)$ for all $t \in \RR$, where $u_{\lambda, x, t}'$ is the Connes-cocycle \eqref{uDD} for $\A(W(x))$, and the pair 
of states $\Omega, \Phi_{\lambda, x}$. As a consequence of \cite[Lem. 4.5]{HL25b} it follows that 
\ben
\label{rSla}
S_\lambda(x) := S(\varphi_\lambda |\! | \omega)_{\A(W(x))}
\een
is continuously differentiable for $x \in W(0)$. By \cite[Thm. 5.1, Rem. 5.2]{HL25b} applied to $\M := \A(W(x))$, we therefore have \eqref{antboth} for 
{\it any} representative $\Phi_{x,\lambda}$ of $\varphi_\lambda$ on $\A(W(x))$, and as a consequence we get \eqref{Sconv} for $S_\lambda$. 

Since by \cite[Thm. A.1]{HL25b}, $\lim_{\lambda \to 0} \Phi_\lambda = \Phi$ in the strong sense, and since $T_\lambda^+T^-_\lambda$ is a channel which leaves 
$\omega$ invariant,  
the lower semi-continuity \cite{Ar76,Ar77}  and monotonicity \cite{U77} of the relative entropy \eqref{rSla} give 
\ben
S(x) \le \liminf_{\lambda \to 0+} S_\lambda(x) \le \limsup_{\lambda \to 0+} S_\lambda(x) \le S(x), 
\een
so $\lim_{\lambda \to 0+} S_\lambda(x) = S(x)$. Thus, we get \eqref{Sconv} for $S$ and all $x,r,s \in W(0)$ for all 
states satisfying $c^{-1}\varphi \le \omega \le c\varphi$ as functionals on $\A(W(0))$.

It is well-known that such states are of the form $\varphi = (m'\Omega, \ . \ m'\Omega)$ for some $m' \in \A(W(0))'$. By applying the 
same arguments as in proofs of \cite[Thm. 6.3]{HL25b} to this class of states we get the statement of the theorem.
\end{proof}

\subsection{QDEC for coherent states of a free scalar field}
\label{coherent}

We now illustrate the QDEC \eqref{Sconv} for coherent states of a free hermitian scalar QFT of mass $m > 0$ on $d=2$ dimensional 
Minkowski spacetime. The algebraic construction of this theory in $d$ dimensions can be described as follows, see, e.g., \cite{sanders}
for details. $V(F), F \in C^\infty_0(O,\RR)$, $O\subset \RR^d$ open with compact closure, are the 
generators of the local Weyl $C^*$-algebras ${\mathfrak A}(O)$ for this QFT, i.e., ${\mathfrak A}(O)$ is the norm closure 
of the unique $*$-algebra generated by the unitary elements $V(F)$. They are subject to the Weyl form of the 
canonical commutation relations:
\ben
V(F) V(G) = {\rm exp}\left[ -i \, {\rm Im} \, w(F,G) \right] V(F+G), \quad V(F)^* = V(-F). 
\een
Here $w$ corresponds to the Wightman 2-point function of the theory, 
\ben
w(F,G) = 2\pi \int\limits_{\RR^{d-1}} \frac{dp}{2E_p} \overline{\hat F(E_p,p)} \hat G(E_p,p), 
\een
where a hat denotes the Fourier transform\footnote{Our convention for the Fourier transformation in $\RR^n$ is 
$\hat F(k) = (2\pi)^{-n/2} \int e^{ikx} F(x) d^n x$.}, and $E_p = \sqrt{|p|^2+m^2}$. Then $\omega(V(F)) := e^{-w(F,F)/2}$ defines a 
state on the closure of $\cup_{O \subset \RR^d} {\mathfrak A}(O)$ in the unique $C^*$-norm. In the GNS-representation $\pi_\omega$ of this state, we define 
the local von Neumann algebras as $\A(O):= \pi_\omega[{\mathfrak A}(O)]''$. 

The GNS-vector $\Omega \equiv \Omega_\omega$ of $\omega$ is the vacuum state, and $\{\A(O)\}$ carries 
a positive energy representation of the Poincare group $SO(1,d-1) \ltimes \RR^d$ leaving $\Omega$ invariant. 
 $\{\A(O)\}$ is a net satisfying a sufficiently strong version of the Haag-Kastler axioms such that 
Borchers' theorem \cite{Bor1,Bor2} holds. Therefore, we have the QDEC
in the sense of theorem \ref{QDEC} in $d=2$. 

We now illustrate this statement for the case of coherent states, defined as follows. Let $E(F,G):=2{\rm Im} \,w(F,G)$, which is a bi-distribution 
on $\RR^2 \times \RR^2$ in the Schwarz class. It is the Schwarz kernel of a linear operator $E: C^\infty_0(\RR^2) \to C^\infty(\RR^2)$, which is equal to the advanced minus 
retarded fundamental solution of the Klein-Gordon (KG) equation.
The image $EF=:\phi$ is a smooth solution to the KG equation, 
\ben
\label{KG}
(-\square +m^2) \phi(x) = 0,
\een
where $\square=\eta^{\mu\nu}\partial_\mu\partial_\nu$.
Then $\Phi := V(F) \Omega$ defines a state called a coherent state associated with the classical solution $\phi$. 
The following expression is a covariantized form of the expression for the relative entropy $S(x)$ \eqref{Sx} rigorously proven in \cite{Ciolli:2019mjo}:
\ben
\label{Scoh}
S(x) = 2\pi \int_{\Sigma(x)} (y-x)^\mu \epsilon_{\mu}{}^\nu T_{\nu\sigma} d\Sigma^\sigma(y).
\een
Here, $\Sigma(x)$ is any Cauchy surface of the right wedge $W(x)$ with apex $x$, $d\Sigma^\mu$ is the induced integration element on that Cauchy
surface, and $\epsilon_{\mu\nu}$ is the up to normalization unique antisymmetric tensor on $\RR^2$. 
$T_{\mu\nu}$ is the stress energy tensor for the classical 
solution $\phi$, 
\ben
\label{T}
T_{\mu\nu} = \partial_\mu \phi \partial_\nu \phi - \frac{1}{2} \eta_{\mu\nu} \left( \partial^\sigma \phi \partial_\sigma \phi + m^2 \phi^2 \right). 
\een
As a consequence of \eqref{KG}, we have $\partial^\mu T_{\mu\nu} = 0$.
Taking a derivative $r^\mu \partial_\mu$ along the spacelike vector $r \in W(0)$, we get
\ben
r^\mu \partial_\mu S(x) = -2\pi \int_{\Sigma(x)} r^\mu \epsilon_{\mu}{}^\nu T_{\nu\sigma} d\Sigma^\sigma(y)
\een
from \eqref{Scoh}. In order to take a further derivative $s^\mu \partial_\mu$ along a second spacelike vector $s \in W(0)$, it is 
convenient to assume that $s^\mu s_\mu = 1$, and that $\Sigma(x)$ is given by $\Sigma(x) = \{ x+\rho s : \rho>0\}$. Then 
the future directed unit normal to $\Sigma(x)$ is $u^\nu = -\epsilon^\nu{}_{\mu} s^\mu$, and $d\Sigma^\mu = u^\mu d\rho$. 
Since the choice of the Cauchy surface $\Sigma(x)$ of $W(x)$ is immaterial due to $\partial^\mu T_{\mu\nu} = 0$ and the fact that ${\rm supp} \phi$ has a compact 
intersection with any Cauchy surface, we can therefore write 
\ben
r^\mu \partial_\mu S(x) = -2\pi \int_{0}^\infty u^\sigma r^\mu \epsilon_{\mu}{}^\nu T_{\nu\sigma}(x+\rho s) d\rho.
\een
This gives
\ben
r^\mu s^\nu \partial_\nu \partial_\mu S(x) = 2\pi \, r^\mu s^\nu \epsilon_{\mu}{}^\alpha \epsilon_\nu{}^\beta T_{\alpha\beta}(x)
\een
for all $x \in \RR^2$.
The dual vectors $\tilde r$ and $\tilde s$ \eqref{dual} are timelike and future directed.
Thus $T_{\mu\nu} \tilde r^\mu \tilde s^\mu \ge 0$, because the DEC holds for the classical stress energy tensor \eqref{T}.
We thereby confirm the QDEC in differential form \eqref{QDECdiff} for coherent states--in fact the second derivative of the relative entropy 
is just the expression for the classical DEC for $\phi$:
\ben
r^\mu s^\nu \partial_\nu \partial_\mu S(x) = 2\pi \, \tilde r^\mu \tilde s^\nu T_{\alpha\beta}(x) \ge 0.
\een
This result  is consistent with 
theorem \ref{QDEC} even tough the assumptions of this theorem do not quite apply to the coherent state vector $\Phi = V(F)\Omega$.
Based on this example we therefore conjecture that theorem \ref{QDEC} holds for {\it all} states $\Phi$ whenever the terms 
on the right side of \eqref{Sconv} are finite. 

\subsection{QDEC and state recovery}
\label{staterecovery}

Let $T: \N \to \M$ be a 2-positive, normal, unital map between von Neumann algebras $\M,\N$, i.e., a (2-positive) channel. Let $\omega$ be a 
faithful state on $\M$ and assume that also $\omega \circ T$ is a faithful state on $\N$. Let $\Omega_\N$ and $\Omega_\M$ be the 
state representers of $\omega \circ T$ and $\omega$, respectively, in the corresponding GNS Hilbert spaces. We denote the 
corresponding modular operators by $\Delta_{\Omega, \N}$ and $\Delta_{\Omega, \M}$, respectively. 

We can use the state $\Omega_\M$ to define the ``KMS scalar product'' on $\M$, by 
\ben
\langle m_1,  m_2 \rangle_\Omega := (\Omega_\N, m_1^* \Delta_{\Omega, \M}^{1/2} m_2 \Omega_\M), 
\een
and we can similarly define a KMS inner product on $\N$ using $\Omega_\N$ and $\Delta_{\Omega,\N}$. Then the adjoint 
of $T$ with respect to these KMS scalar products is a 2-positive channel $T^\dagger: \M \to \N$ called the Petz map (of $T$).
The rotated, smeared Petz map is defined by 
\ben
\rho(m) := \int_{-\infty}^\infty \beta(t) \Delta_{\Om,\N}^{-it} T^\dagger \left( \Delta_{\Om,\M}^{it} m \Delta_{\Om,\M}^{-it} \right) \Delta_{\Om,\N}^{it} \, dt
\een
where $\beta(t) := \pi [1+\cosh(2\pi t)]^{-1}$ is a probability density. $\rho$ is a 2-positive channel $\rho: \M \to \N$. In \cite{hollandspetz}, the following 
``recovery bound'' was proven for a arbitrary normal state $\varphi$ on $\N$:
\ben
\label{recovery:1}
S(\varphi |\! | \omega)_{\M}-S(\varphi \circ T |\! | \omega \circ T)_{\N} \ge -\log F(\varphi | \! | \varphi \circ T \circ \rho)_\M. 
\een
Here, $F(\varphi | \! | \psi)_\N$ is the fidelity between two normal states $\psi, \varphi$ on the von Neumann algebra $\M$. It can be seen as a transition 
amplitude between the states and is defined as 
\ben
F(\varphi | \! | \psi)_\M = \sup \{ |(\Psi, \Phi)|^2 \ : \ \text{vector representatives $\Phi,\Psi$ of $\varphi,\psi$} \}.
\een
The fidelity satisfies $0 \le F(\varphi | \! | \psi)_\M \le 1$, with $F(\varphi | \! | \psi)_\M=1$ if and only if $\psi = \varphi$ as states on $\M$. Thus, the right side of the recovery bound 
\eqref{recovery:1} is a non-negative number expressing an improvement of the data processing inequality for the relative entropy. 

We will now combine the state recovery bound with the QDEC. For this, we set $\N:=\A(W(R/\sqrt{2}, R/\sqrt{2}))$ and $\M=\A(W(0))$ for some $R\ge 0$, i.e., $\N \subset \M$ is a von Neumann subalgebra, and we define $S(R/\sqrt{2},R/\sqrt{2})$ as before in \eqref{Sx}, see fig. \ref{fig:3}.

\begin{figure}[t]
\centering
\begin{tikzpicture}[scale=3]

\draw[thick] (0,1)--(1,0)--(0,-1)--(-1,0)--cycle;
\draw[dashed] (0,1)--(0,-1);
\node at (0,1.12) {$i^+$};
\node at (0,-1.12) {$i^-$};
\node at (-.68,-.5) {$I^-$};
\node at (.68,0.5) {$I^+$};
\node at (0.12,0) {$i^0$};

\fill[blue!15]
(0,0)--(0.5,.5)--(1,0)--(0.5,-.5)--cycle;

\draw[blue,thick]
(0,0)--(0.5,.5)--(1,0)--(0.5,-.5)--cycle;

\node[blue] at (-.2,0) {$\A(0)$};

\fill[red!20]
(0.2,0.2)--(0.5,.5)--(1,0)--(0.7,-.3)--cycle;

\draw[red,thick]
(0.2,0.2)--(0.5,.5)--(1,0)--(0.7,-.3)--cycle;

\node[red] at (0.18,0.38) {$\A(R)$};

\end{tikzpicture}
\caption{Algebras of $W(0)$ (blue) and of $W(R/\sqrt{2},R/\sqrt{2})$ (red) in a Penrose diagram of 2-dimensional Minkowski spacetime.
}
\label{fig:3}
\end{figure}
We consider the embedding $\iota_R: \N \to \M$ as a (completely positive) channel, and we let $\rho_R$ be the rotated Petz channel of $\iota_R$. The recovery bound reads
\ben
\label{recovery:2}
S(0,0)-S(R/\sqrt{2},R/\sqrt{2}) \ge -\log F(\varphi | \! | \varphi_{|\A(R)} \circ \rho_R). 
\een
in our previous notations. Here $\varphi_{|\A(R)}$, is the ``partial state'' functional induced by the vector $\Phi \in \H$ on the subalgebra 
$\A(R):=\A(W(R/\sqrt{2},R/\sqrt{2}))$ of $\A(0):=\A(W(0))$, and $\varphi$ is the state function on $\A(0)$.
A more explicit form of the right side via a formula for $\rho_R$ using the null-translation group
given by the unitary positive energy representation of the QFT 
follows from \cite[Cor. 2]{hollandspetz}.

The following theorem expresses that the ability to recover $\varphi$ from $\varphi_{|\A(R)}$ is guaranteed to be greater if the state $\Phi$ has less energy.

\begin{theorem}
\label{thm:recov}
Let $R \ge 0$, $\Phi = M'\Omega \in \H$ in \eqref{Sx}, where $M' \hat\in \A(W(0))'$, and let $\varphi_{|\A(R)}$ be the partial state functional on
$\A(R):=\A(W(R/\sqrt{2},R/\sqrt{2})) \subset \A(W(0))$ induced by $\Phi$. Then 
\ben
\label{recovb}
-\log F(\varphi | \! | \varphi_{|\A(R)} \circ \rho_R) \le 2\pi R \ (\Phi, P_0\Phi),
\een
where $P_0$ is the Hamiltonian (generator of time translations), and $\rho_R$ is the rotated Petz recovery channel.
\end{theorem}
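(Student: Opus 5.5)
Combine the recovery bound \eqref{recovery:2} with the QDEC of Theorem \ref{QDEC}: bound $S(0,0)-S(R/\sqrt2,R/\sqrt2)$ from above by $2\pi R\,(\Phi,P_0\Phi)$, and \eqref{recovb} then follows at once. Set $x(\tau)=(\tau/\sqrt2,\tau/\sqrt2)$ in null coordinates, $0\le\tau\le R$; this is the straight line from the apex $0$ to the apex of $W(R/\sqrt2,R/\sqrt2)$, moving into the wedge along the null directions $\partial_+$ and $-\partial_-$ (the latter with $r^-<0$, matching the sign convention in \eqref{dual}). The difference is then $-\int_0^R \frac{d}{d\tau}S(x(\tau))\,d\tau$. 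Write $r:=(1/\sqrt2,-1/\sqrt2)$, with dual $\tilde r=(1/\sqrt2,1/\sqrt2)$. The convexity \eqref{Sconv} gives the monotonicity \eqref{antboth1}: $-r^\mu\partial_\mu S$ is non-increasing under shifts into $W(0)$. Hence each difference quotient along the path is dominated by the one-sided derivative at the starting point $x=0$. So I aim for
\[
S(0)-S(x(R)) \le R\cdot\big(-r^\mu\partial_\mu S\big)\big|_{x=0^+}.
\]
This requires a careful matching of the path direction with an $r\in \overline{W(0)}$.

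Next, evaluate the derivative at the apex with the ant formula \eqref{antboth}. Taking $u'_\pm=1$ in the infimum gives
\[
-r^\mu\partial_\mu S(0)\le 2\pi\big(\tilde r^+(\Phi,P_+\Phi)+\tilde r^-(\Phi,P_-\Phi)\big).
\]
With $\tilde r^\pm=1/\sqrt2$ we have $\tfrac{1}{\sqrt2}(P_++P_-)=P_0$, because $x^\pm=(x^0\pm x^1)/\sqrt2$ gives $a^+P_++a^-P_-=a^0(P_++P_-)/\sqrt2+\dots$ So the bound is $2\pi(\Phi,P_0\Phi)$, and multiplying by $R$ gives $S(0,0)-S(R/\sqrt2,R/\sqrt2)\le 2\pi R(\Phi,P_0\Phi)$. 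Inserting this into \eqref{recovery:2} proves \eqref{recovb}.

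The main obstacle is regularity. Theorem \ref{QDEC} gives only the integrated convexity \eqref{Sconv} for $\Phi=M'\Omega$, whereas \eqref{antboth} needs differentiability and domain conditions. I would therefore first prove the bound for the regularized states $\varphi_\lambda=\varphi\circ T^+_\lambda T^-_\lambda$ with $c^{-1}\varphi\le\omega\le c\varphi$, as in the proof of Theorem \ref{QDEC}. For these, $S_\lambda$ is $C^1$ and \eqref{antboth} holds, and the derivative at $x\to0$ can be controlled by continuity. Then I would pass to $\lambda\to0$. This uses $S_\lambda(x)\to S(x)$, as established in that proof, together with the fact that the energy cannot increase: $(\Phi_\lambda,P_0\Phi_\lambda)\le(\Phi,P_0\Phi)$, because $T^\pm_\lambda$ are averages over translations commuting with $P_0$. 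Here I would need to check that the natural-cone representative obeys this; if it does not, I would instead use the infimum form with a suitable unitary $u'$. Finally, I would extend to general $M'\hat\in\A(W(0))'$ by the approximation argument of \cite[Thm. 6.3]{HL25b}, using lower semicontinuity; if $(\Phi,P_0\Phi)=\infty$ there is nothing to prove. A boundary issue remains: the apex $x=0$ lies on $\partial W(0)$ rather than in its interior. I would handle it by starting at $x=\epsilon\,\tilde r$-shifted points and letting $\epsilon\to0$, using lower semicontinuity of $S$ and the fact that translations leave $(\Phi,P_0\Phi)$ invariant.
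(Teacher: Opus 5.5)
\textbf{Your integration path does not end at the apex of $\A(R)$.} Your direction $r=(r^+,r^-)=(1/\sqrt{2},-1/\sqrt{2})$ is the unit spatial vector $\partial_1$. After length $R$ you are therefore at $x^+=R/\sqrt{2}$, $x^-=-R/\sqrt{2}$, which is the Cartesian point $(x^0,x^1)=(0,R)$. (The path $x(\tau)=(\tau/\sqrt{2},\tau/\sqrt{2})$ in null coordinates, as you literally wrote it, would even be a time translation leaving $W(0)$.) By \eqref{Wdef0} and Fig.~\ref{fig:3}, however, $W(R/\sqrt{2},R/\sqrt{2})$ has Cartesian apex $(R/\sqrt{2},R/\sqrt{2})$, i.e.\ $x^+=R$, $x^-=0$: it is $W(0)$ null-translated by $R$ along its future horizon. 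With the regularization and density steps done as in the paper, what you have actually proved is the paper's intermediate inequality \eqref{Sdiffbound2}, $S(0,0)-S(0,R)\le 2\pi R(\Phi,P_0\Phi)$. That is not the bound on $S(0,0)-S(R/\sqrt{2},R/\sqrt{2})$ that has to be inserted into \eqref{recovery:2}.

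The paper bridges this with a monotonicity claim, $S(R/\sqrt{2},R/\sqrt{2})\ge S(0,R)$. That would need $W(0,R)\subset W(R/\sqrt{2},R/\sqrt{2})$, i.e.\ $R/\sqrt{2}\ge R$ for the $x^+$-coordinates of the apexes, which is false: the wedges are not nested, so you cannot simply add this step. In fact no argument that goes through an upper bound on $S(0,0)-S(R/\sqrt{2},R/\sqrt{2})$ can give the constant $2\pi R$ in front of $(\Phi,P_0\Phi)$. Take, via \eqref{Scoh}, a fast left-moving coherent packet ($\phi\approx g(x^+)$) of energy $E$ located at $x^1=a\gg R$ at time $0$. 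It has $S(0,0)\approx 2\pi aE$ and $S(R/\sqrt{2},R/\sqrt{2})\approx 2\pi(a-\sqrt{2}R)E$, so the difference is about $2\sqrt{2}\,\pi R(\Phi,P_0\Phi)$. As a check, the same packet gives $S(0,0)-S(0,R)\approx 2\pi RE$, saturating \eqref{Sdiffbound2}.

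What your method honestly yields for the wedge in the statement comes from integrating \eqref{antboth} along the null segment $r=\tilde r=(R,0)$ (null components), taking $u'=1$ at every point. This gives $S(0,0)-S(R/\sqrt{2},R/\sqrt{2})\le 2\pi R(\Phi,P_+\Phi)\le 2\sqrt{2}\,\pi R(\Phi,P_0\Phi)$, hence \eqref{recovb} with $P_+$ in place of $P_0$. Alternatively, keep $P_0$ and use the wedge with apex $x^+=R/\sqrt{2}$, $x^-=0$, which does contain $W(0,R)$.

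A minor point: your detour through convexity and the one-sided derivative at the boundary point $x=0$ is unnecessary. The choice $u'=1$ is admissible in the infimum of \eqref{antboth} at every point of the path, and that is how the paper bounds the integrand, without needing any derivative at the apex.
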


\begin{remark}
Since the fidelity only depends on the state functional $\varphi = (\Phi, \, . \, \Phi)$ on $\A(W(0))$ induced by $\Phi$, we could optimize the bound replacing the right side by 
the infimum over all state representatives, i.e., by $2\pi R  \inf \{ (u'\Phi, P_0u'\Phi) :  u' \in \A(W(0))'  \text{unitary}\}$.
\end{remark}
\begin{remark}
Using the form of the recovery bound given in  \cite[Cor. 2]{hollandspetz}, we may express our bound also as 
\ben
2\pi \ (\Phi, P_0\Phi) \ge -\int\limits_R^\infty \frac{da}{a^2} \, \log F( \varphi | \! | e^{iaP_+} \varphi e^{-iaP_+} ).
\een
\end{remark}

\begin{proof}
(of thm. \ref{thm:recov})
At first let $\Phi = m'\Omega$, where $m' \in \A(W(0))'$. Then the corresponding state functional $\varphi = (\Phi, \ . \ \Phi)$ is 
$c$-comparable to the state functional $\omega = (\Omega, \ . \ \Omega)$ on any algebra $\A(W(x))$ for $x \in W(0)$, i.e.
we have that $c^{-1}\varphi \le \omega \le c\varphi$. 
Let $T_\lambda^\pm, \lambda>0$ be the normal, unital, completely positive linear maps (i.e., channels)  $\A(W(0)) \to \A(W(0))$ defined by 
\eqref{Tladef}. Again, we define $\varphi_\lambda = \varphi \circ T_\lambda^+T^-_\lambda$. 

Then as already argued in the proof of theorem 
\ref{QDEC}, it follows that $S_\lambda(x)$ given by 
\eqref{rSla} is differentiable and we have \eqref{antboth} for $x= \in W(0)$. Let $\tilde r$ be a future pointing timelike or null 
vector, i.e. $\tilde r^\pm \ge 0$, and $r^\mu = \epsilon^\mu{}_\nu \tilde r^\nu$, so that $r$ is a right pointing spacelike or null 
vector, i.e., $\pm r^\pm \ge 0$  (recall the double null coordinates $x^\pm=(x^0 \pm x^1)/\sqrt{2}$). Integrating \eqref{antboth} up, we have
\ben
\begin{split}
& S_\lambda(x-r) - S_\lambda(x) = \ \int_0^1 \frac{d}{dt} S_\lambda(x-tr) \ dt \\
=& \ 2\pi \int_0^1 \inf \bigg\{ 
\tilde r^+(u_+'\Phi, P_+ u_+' \Phi) + \tilde r^-(u_-'\Phi, P_- u_-' \Phi) \ : 
  u'_\pm \in \A(W(x-tr))'
\bigg\} \ dt \\
\le & \ 2\pi \, \bigg[ \tilde r^+(\Phi, P_+ \Phi) + \tilde r^-(\Phi, P_-\Phi) \bigg]
\end{split}
\een
Taking $r=x=(0,R)$ then gives that
\ben
\label{Sdiffbound}
S_\lambda(0,0)-S_\lambda(0,R) \le 2\pi R \, (\Phi, P_0\Phi),
\een
where the arguments of $S_\lambda$  refer to the ordinary Cartesian coordinates $x=(x^0,x^1)$ and not null coordinates.
By the same argument as in the proof of thm. 6.1 of \cite{HL25b}, $S_\lambda(x) \to S(x)$ for $x \in W(0)$.  
So \eqref{Sdiffbound} holds for $S$ instead of $S_\lambda$ as well.
Combining this statement with the recovery bound \eqref{recovery:2} and 
monotonicity of the relative entropy \cite{U77}, $S(R/\sqrt{2},R/\sqrt{2}) \ge S(0,R)$, concludes the proof when $\Phi$ 
is of the form $m'\Omega, m' \in \A(W(0))'$.

To generalize this to the larger domain of states given in the statement, we use the same approximation steps as in the proof of thm. 6.3 of \cite{HL25b}:
Let $e_n'$ be the spectral projections for the spectral interval $[0,n]$ of $|M'|$, which are elements of $\A(W(0))'$ since $M'$ is affiliated. Then we set
$\Phi_{\epsilon,n} := (\epsilon 1 + |M'|^2e'_n)^{1/2} \Omega$ for $\epsilon>0$. Then each state $\Phi_{\epsilon,n}$ is of the form $m'_{\epsilon,n} \Omega$ 
for $m_{\epsilon,n}' \in \A(W(0))'$.
We define $S_{n,\epsilon}(x)$ by \eqref{rSla} but with the state $\Phi_{\epsilon,n}$
instead of $\Phi$. By our previous arguments, we have \eqref{Sdiffbound}, i.e., 
\ben
\label{Sdiffbound1}
S_{\epsilon,n}(0,0)-S_{\epsilon,n}(0,R) \le 2\pi R \, (\Phi_{\epsilon,n}, P_0 \, \Phi_{\epsilon,n}).
\een
We now take the limit as $n \to \infty, \epsilon \to 0$. On the left side, we can use the fact (proof of thm. 6.3 of \cite{HL25b}) $\lim_{\epsilon \to 0} \lim_{n \to \infty} S_{n,\epsilon}(x) = S(x)$ for $x \in W(0)$.
 On the right side we can use that $\lim_{\epsilon \to 0} \lim_{n \to \infty} \Phi_{\epsilon,n} = \Phi$ (convergence in norm). 
 Then, if $\Phi \in \D(P_0^{1/2})$, we get 
\ben
\label{Sdiffbound2}
S(0,0)-S(0,R) \le 2\pi R \, (\Phi, P_0\, \Phi), 
\een
whereas if $\Phi \notin \D(P_0^{1/2})$, then the right side is $+\infty$ and there is nothing to show. 
Finally, we apply the recovery bound \eqref{recovery:2}, which has no domain restrictions, to \eqref{Sdiffbound2}.
\end{proof}

\section{QDEC in $d>2$ dimensions}

\subsection{Rigorous formulation}
We begin by splitting the coordinates of $d$-dimensional Minkowski space into $x = (x_\perp, x_\parallel),$ where $x_\perp = (x^0, x^1)$ and 
$x_\parallel = (x^2, \dots, x^{d-1})$. We fix a function $f \in C^\infty_0(\RR^{d-2}_\parallel)$ such that $f(x_\parallel) \ge 0$, and a 2-vector $r_\perp = (r_\perp^+,r_\perp^-)$ with $r_\perp^\pm>0$, and define the 
deformed wedges
\ben
\label{Wdef}
W_f(x_\perp, r_\perp) = \{ y \in \RR^{d} \, : \,  y^+ > x^+ + r_\perp^+ f(y_\parallel), y^- < x^- +r_\perp^- f(y_\parallel) \}, 
\een
see fig. \ref{fig1}.
As before null coordinates on $\RR^2_\perp$ 
are defined as $x^\pm = (x^0 \pm x^1)/\sqrt{2}$, so that the Minkowski metric is 
$\eta = -2dx^+ dx^- + dx_\parallel^2$.
\begin{figure}[h!]
\begin{center}
  \includegraphics[width=0.5\textwidth,]{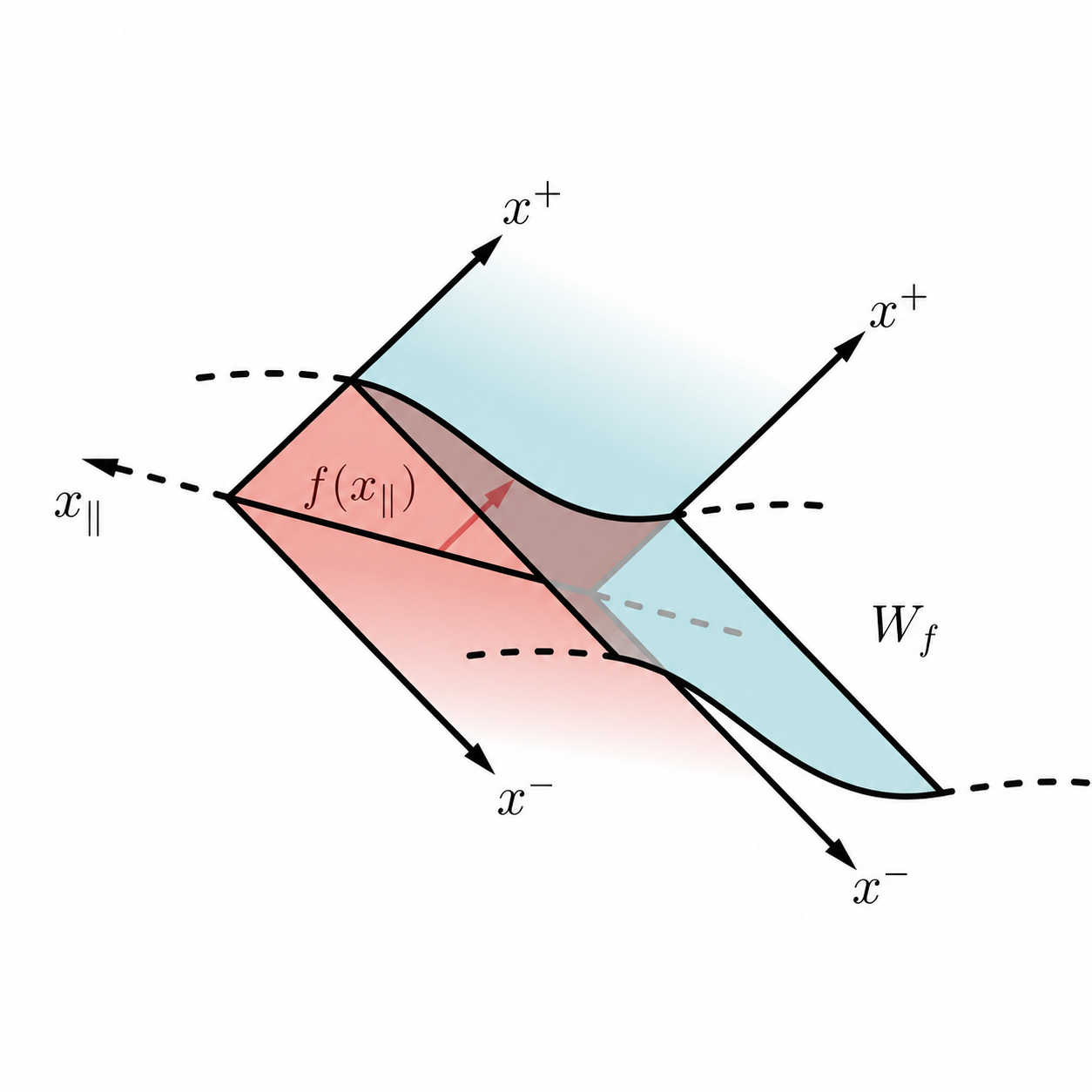}
  \end{center}
  \label{fig1}
  \caption{(Adapted from \cite{GGI}) The deformed wedge $W_f$ \eqref{Wdef} is in blue, the original wedge in red. The edge of the shifted wedge is parameterized by a function $f(x_\parallel)$.
  In this figure, $r_\perp^+=1, r_\perp^-=0$, i.e., the wedge is shifted along the upper horizon. In the general case, the blue wedge can be strictly inside the red wedge.}
\end{figure}
We consider QFTs on $d$-dimensional Minkowski spacetime with vacuum state $\Omega$ 
satisfying a sufficiently strong version of the Araki-Kastler axioms \cite{haag_2} such 
that the conclusions of the Bisognano-Wichmann theorem holds for wedges. It follows that, for any non-zero $f$, 
\ben
\label{hsmid}
\begin{split}
\Big(\A(W_f(x_\perp, r_\perp) \,\,  \subset\,\,  \A(W_f(x_\perp, 0)), \Omega \Big) & \quad \text{for $r_\perp^+=1,r_\perp^-=0$},\\ 
\textcolor{black}{\Big(\A(W_f(x_\perp, r_\perp))' \subset \A(W_f(x_\perp,0))', \Omega \Big)} & \quad \text{for $r_\perp^+=0, r_\perp^-=1$}
\end{split}
\een
define non-trivial HSMIs, 
with associated self-adjoint positive generators $P_+(x_\perp)$ respectively $P_-(x_\perp)$. These depend on $f$, but we will suppress this in our notation.
\textcolor{black}{We also write $K(x_\perp)$ for the generator of boosts leaving the wedge $W_f(x_\perp,0) \equiv W(x_\perp)$ (see \eqref{Wdef0}) fixed, so that 
the modular operator $\Delta(x_\perp) := \Delta_{\Omega}$ for the wedge-algebra $\A(W(x_\perp))$ satisfies $\Delta(x_\perp)^{it} = e^{-i2\pi t K(x_\perp)}$ by 
the Bisognano-Wichmann theorem, see e.g., \cite{haag_2}.}

\textcolor{black}{We have the following lemma.
\begin{lemma}
\label{lemaa:0}
Let $r_\perp$ be such that $r^-_\perp=0$ and $r^+_\perp>0$. Then 
\[
\mathcal{A}(W_f(x_\perp, r_\perp)) = e^{i r^+_\perp P_+(x_\perp)} \mathcal{A}(W_f(x_\perp,0)) e^{-i r^+_\perp P_+(x_\perp)},
\]
i.e., $P_+(x_\perp)$ has a geometric action. Similarly, if $r^+_\perp=0$ and $r^-_\perp>0$, then 
\[
\mathcal{A}(W_f(x_\perp, r_\perp))' = e^{i r^-_\perp P_-(x_\perp)} \mathcal{A}(W_f(x_\perp,0))' e^{-i r^-_\perp P_-(x_\perp)}.
\]
\end{lemma}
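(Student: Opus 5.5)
The plan is to derive the geometric action of $P_+(x_\perp)$ directly from Wiesbrock's theorem, as recalled in Sec.~\ref{hsm}, combined with the Bisognano--Wichmann identification of the modular group of the undeformed wedge. By translation covariance, $U(a)\A(O)U(a)^*=\A(O+a)$ with $a=(x_\perp,0)$, which also maps $W_f(0,r_\perp)$ to $W_f(x_\perp,r_\perp)$. So I will assume $x_\perp=0$ throughout and write $\M:=\A(W_f(0,0))=\A(W(0))$. For the first statement I take $\N:=\A(W_f(0,(1,0)))$. By \eqref{hsmid}, $(\N\subset\M,\Omega)$ is a HSMI, and $P_+\equiv P_+(0)$ is by definition the generator of the one-parameter group $U(a)=e^{iaP_+}$ supplied by Wiesbrock's theorem. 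Property 4 of that theorem gives $\M(1)=e^{iP_+}\M e^{-iP_+}=\N$, which is the claim for $r^+_\perp=1$.

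To reach arbitrary $r^+_\perp>0$, I will use property 6, $\Delta_\Omega^{-it}\M(a)\Delta_\Omega^{it}=\M(e^{2\pi t}a)$, at $a=1$. This gives $\M(e^{2\pi t})=\Delta_\Omega^{-it}\N\Delta_\Omega^{it}$. By Bisognano--Wichmann, $\Delta_\Omega^{-it}=e^{i2\pi tK(0)}$ implements a boost $\Lambda_t$ in the $x_\perp$-plane that leaves $x_\parallel$ fixed and acts as $y^\pm\mapsto e^{\pm 2\pi t}y^\pm$ in some sign convention. Applying $\Lambda_t$ to the defining inequalities of $W_f(0,(1,0))$, namely $y^+>f(y_\parallel)$ and $y^-<0$, yields $y^+>e^{2\pi t}f(y_\parallel)$ and $y^-<0$, i.e.\ the region $W_f(0,(e^{2\pi t},0))$. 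Poincaré covariance of the net then gives $\M(e^{2\pi t})=\A(W_f(0,(e^{2\pi t},0)))$. Since $e^{2\pi t}$ ranges over all of $(0,\infty)$, this proves the first identity.

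The step I expect to require the most care is the sign of the boost. I must check that the Bisognano--Wichmann sign convention, $\Delta^{it}=e^{-i2\pi tK}$, matches the direction $\Delta^{-it}\N\Delta^{it}\subset\N$ for $t\ge0$ in the definition of a HSMI. I will argue this consistency directly. The boost $y^+\mapsto e^{\lambda}y^+$ with $\lambda>0$ maps $W_f(0,(1,0))$ into itself because $f\ge0$, whereas the opposite boost does not when $f\neq0$. The half-sided condition therefore forces $\Delta^{-it}$, $t>0$, to act as the expanding boost in $y^+$, which is exactly what the computation above uses. The assumption $f\ge0$ is also what guarantees that $W_f(0,(r,0))\subset W(0)$ is a proper inclusion, as the HSMI requires.

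For the second statement I will repeat the argument with the commutant HSMI $(\A(W_f(0,(0,1)))'\subset\A(W(0))',\Omega)$ from \eqref{hsmid}. The modular group of $\A(W(0))'$ is $\Delta_\Omega^{-it}$, the reverse boost. Under it the condition $y^-<f(y_\parallel)$ scales to $y^-<e^{2\pi t}f(y_\parallel)$, which again gives the full family $\A(W_f(0,(0,r^-_\perp)))'$ for $r^-_\perp>0$. Alternatively, I could reduce this case to the first by conjugating with $J_\Omega$, which implements the wedge reflection.
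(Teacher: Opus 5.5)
Your proposal is correct and follows essentially the paper's argument: both combine the Bisognano--Wichmann geometric action of $\Delta(x_\perp)^{it}$ on the deformed wedges with properties 4) and 6) of the HSMI. The paper first rescales $r_\perp$ to $(1,0)$ with the modular group, applies 4), and conjugates back with 6); you instead apply 6) to $\M(1)=\N$ and then identify $\Delta^{-it}\N\Delta^{it}$ geometrically, which is the same argument in a different order (your reduction to $x_\perp=0$ and the sign check are harmless extras, since the argument works verbatim at general $x_\perp$ with $\Delta(x_\perp)$).
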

\begin{proof}
We give the proof for the "$+$" only since the other case is analogous by properties of the HSMIs for the commutants. 
First since the action of $\Delta(x_\perp)^{it}$ is geometric, we have 
\ben
\Delta(x_\perp)^{-it}\mathcal{A}(W_f(x_\perp, r_\perp))\Delta(x_\perp)^{it} = \mathcal{A}(W_f(x_\perp, e^{2\pi t}r_\perp)).
\een
Now we take $t$ such that $e^{2\pi t} = (r^+_\perp)^{-1}$. Then, by the property 4) of the HSMIs and the first line of \eqref{hsmid}, 
\ben
\mathcal{A}(W_f(x_\perp, e^{2\pi t}r_\perp))=e^{iP_+(x_\perp)} \mathcal{A}(W_f(x_\perp, 0)) e^{-iP_+(x_\perp)}.
\een
By combining these two relations and using the property 6) of HSMIs, we get the statement.
\end{proof}
}

As in the 2-dimensional case, we consider the future directed timelike or null vector $r_\perp$ and its dual \eqref{dual} $\tilde r_\perp$, which in our conventions is a spacelike or null 
vector pointing to the right. Then we first define the variational derivative
\ben
\label{vardefS}
r_\perp^\mu \delta_\mu S(x_\perp) :=  \frac{d}{d\lambda} S(\Phi |\! | \Omega)_{\A(W_f(x_\perp, \lambda r_\perp))} \Bigg|_{\lambda=0}, 
\een
which implicitly depends on $f$. Then \textcolor{black}{using lemma \ref{lemaa:0}} we have a formula analogous to \eqref{antboth}
\ben
\label{antboth1}
\begin{split}
-r_\perp^\mu \delta_\mu S(x_\perp)
=& 2\pi \inf \Big\{ \tilde r^+_\perp (u_+' \, \Phi, P_+(x_\perp) \, u_+'\Phi) + \tilde r^-_\perp (u_-' \, \Phi, P_-(x_\perp) \, u_-'\Phi) \, :\\ 
 &\hspace{1cm} u_\pm' \in \A(W_f(x_{\perp},0))', \, u'_\pm \, {\rm unitary} \Big\} ,
\end{split}
\een
under analogous domain conditions on $\Phi$.
Contrary to the 2-dimensional case, this variational formula here does not imply that $-r_\perp^\mu \delta_\mu S(x_\perp)\ge -r_\perp^\mu \delta_\mu S(x_\perp+s_\perp)$
when $s_\perp$ is a spacelike or null vector that is pointing to the right, because, contrary to the 2-dimensional case, $P_\pm(x_\perp)$ may depend on $x_\perp$ due to the deformation 
induced by $f$. 

To address this issue, we use the explicit formula for the minimizer in the ant formulas \eqref{antleft} and \eqref{antright} with the Connes cocycle 
$u'_t$ for $\A(W_f(x_{\perp},0))'$ with respect to $\Phi, \Omega$, giving
\ben
\label{antboth2}
-r_\perp^\mu \delta_\mu S(x_\perp)
= 2\pi \tilde r^+_\perp \lim_{t \to +\infty}(u_t' \, \Phi, P_+(x_\perp) \, u_t'\Phi) + 2\pi \tilde r^-_\perp \lim_{t \to -\infty} (u_t' \, \Phi, P_-(x_\perp) \, u_t'\Phi) \, 
\een
Using the ant formulas again then gives
\ben
\label{antboth3}
\begin{split}
-r_\perp^\mu \delta_\mu S(x_\perp)
\ge & \,  2\pi \tilde r^+_\perp \lim_{t \to +\infty}\left( u_t' \Phi, \Big[ P_+(x_\perp) - P_+(x_\perp+s_\perp) \Big] u_t'\Phi \right) +\\
& \, 2\pi \tilde r^-_\perp \lim_{t \to -\infty} \left(u_t'  \Phi, \Big[ P_-(x_\perp) - P_-(x_\perp+s_\perp) \Big] u_t'\Phi \right) +\\
& \, 2\pi \inf \Big\{ \tilde r^+_\perp (u_+' \, \Phi, P_+(x_\perp+s_\perp) \, u_+'\Phi) + \tilde r^-_\perp (u_-' \, \Phi, P_-(x_\perp+s_\perp) \, u_-'\Phi) \, :\\ 
 &\hspace{1cm} u_\pm' \in \A(W_f(x_{\perp}+s_\perp,0))', \, u'_\pm \, {\rm unitary} \Big\} ,
\end{split}
\een
which using the ant formulas one more time gives:
\begin{theorem}
\label{thm6.1}
Under the domain assumptions analogous to proposition \ref{propant} on $\Phi$, we have 
\ben
\label{antboth4}
\begin{split}
&-\Big[ r_\perp^\mu \delta_\mu S(x_\perp) - r_\perp^\mu \delta_\mu S(x_\perp+s_\perp) \Big] \\
\ge & \,  2\pi \tilde r^+_\perp \lim_{t \to +\infty}\left( u_t' \Phi, \Big[ P_+(x_\perp) - P_+(x_\perp+s_\perp) \Big] u_t'\Phi \right) +\\
& \, 2\pi \tilde r^-_\perp \lim_{t \to -\infty} \left(u_t'  \Phi, \Big[ P_-(x_\perp) - P_-(x_\perp+s_\perp) \Big] u_t'\Phi \right) \, .
\end{split}
\een
for any future pointing timelike or null vector $r_\perp$ with dual \eqref{dual} $\tilde r_\perp$, and any spacelike or null vector $s_\perp$ pointing to the right. 
The definition of $P_\pm(x_\perp)$ and of the variational derivative \eqref{vardefS} $\delta S(x_\perp)$ implicitly depend on the chosen testfunction $f(x_\parallel) \ge 0$.
\end{theorem}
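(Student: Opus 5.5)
The plan is to combine the ant formula at the two points $x_\perp$ and $x_\perp+s_\perp$. One side uses the exact minimizer given by the Connes cocycle; the other side uses only the variational characterization, i.e. the infimum. The starting point is \eqref{antboth2}: the Connes cocycle $u_t'$ of $\A(W_f(x_\perp,0))'$ with respect to $\Phi,\Omega$ realizes the infimum in \eqref{antboth1}. For the $+$ component this happens as $t\to+\infty$, by \eqref{antleft} applied to the first HSMI in \eqref{hsmid}. For the $-$ component it happens as $t\to-\infty$, by the conjugated version \eqref{antright} applied to the commutant HSMI in the second line of \eqref{hsmid}. The domain assumptions, analogous to those of proposition \ref{propant} and imposed at both $x_\perp$ and $x_\perp+s_\perp$, guarantee that these limits exist and equal the infima. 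They also guarantee that $u_t'\Phi\in\D(P_\pm(x_\perp))\cap\D(P_\pm(x_\perp+s_\perp))$.

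Next, inside each limit in \eqref{antboth2} I add and subtract $P_\pm(x_\perp+s_\perp)$. This splits the right side into the two difference terms appearing in \eqref{antboth4} plus the remainder
\[
2\pi\tilde r^+_\perp \lim_{t\to+\infty}(u_t'\Phi,P_+(x_\perp+s_\perp)u_t'\Phi)+2\pi\tilde r^-_\perp\lim_{t\to-\infty}(u_t'\Phi,P_-(x_\perp+s_\perp)u_t'\Phi).
\]
To bound this remainder from below, I use the geometric inclusion. Since $s_\perp$ is spacelike or null and points to the right, $W_f(x_\perp+s_\perp,0)\subset W_f(x_\perp,0)$. Hence $\A(W_f(x_\perp,0))'\subset\A(W_f(x_\perp+s_\perp,0))'$, so each $u_t'$ is an admissible unitary in the infimum \eqref{antboth1} written at the point $x_\perp+s_\perp$. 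That infimum ranges independently over $u_+'$ and $u_-'$. I may therefore take $u_+'=u_t'$ with $t$ large positive and $u_-'=u_{t'}'$ with $t'$ large negative, and then pass to the limits. This shows that the remainder is at least the infimum expression at $x_\perp+s_\perp$, which gives \eqref{antboth3}.

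Finally, I apply the ant formula \eqref{antboth1} once more, now at the point $x_\perp+s_\perp$. This identifies the infimum there with $-r_\perp^\mu\delta_\mu S(x_\perp+s_\perp)$. Moving this term to the left side yields \eqref{antboth4}.

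I expect the main obstacle to be justifying the splitting of the limits. The limits of the difference terms must exist separately, which amounts to asking that the limits of $(u_t'\Phi,P_\pm(x_\perp+s_\perp)u_t'\Phi)$ exist. If they do not, I would first replace $\lim$ by $\liminf$ in the difference terms. The subtraction is legitimate because the full limit exists and the $P_\pm(x_\perp+s_\perp)$ part is bounded below by the infimum along every subsequence. The stated inequality then holds with $\lim$ read as $\liminf$ or $\limsup$, or with $\lim$ itself if the domain assumptions are strengthened to ensure convergence. A second point to check is that lemma \ref{lemaa:0} indeed makes $P_\pm(x_\perp)$ the correct generators, so that the variational derivative \eqref{vardefS} is computed by \eqref{antleft} and \eqref{antright} along $\lambda r_\perp$. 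This follows by rescaling via property 6 of HSMIs, exactly as in the two-dimensional derivation of \eqref{antboth}.
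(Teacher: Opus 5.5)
Your proposal is correct and follows the paper's argument step for step: you start from the Connes-cocycle form \eqref{antboth2}, add and subtract $P_\pm(x_\perp+s_\perp)$, bound the remainder from below by the infimum over the larger algebra $\A(W_f(x_\perp+s_\perp,0))'\supset\A(W_f(x_\perp,0))'$ to get \eqref{antboth3}, and then identify that infimum with $-r_\perp^\mu\delta_\mu S(x_\perp+s_\perp)$ via the ant formula at $x_\perp+s_\perp$. Your extra observation is a valid refinement of a point the paper passes over silently: if the separate limits fail to exist, the inequality still holds with $\limsup$ (hence also $\liminf$) in the difference terms, because $(u_t'\Phi,P_\pm(x_\perp+s_\perp)u_t'\Phi)$ is bounded below by the corresponding infimum for every $t$.
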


\subsection{Other forms of the QDEC}

\noindent
{\bf Relation between $P_\pm(x_\perp)$ and the stress energy tensor for free scalar field theory of mass $m>0$.} 
It has been argued heuristically \cite{Markov} that the generators of the HSMIs \eqref{hsmid}
should be given in terms of the quantum stress energy tensor by\footnote{Note that, assuming \eqref{PT}, $P_\pm(x_\perp)$ only depends on $x^\mp$.}
\ben
\label{PT}
P_\pm(x_\perp) = \int T_{\pm \pm}(x) f(x_\parallel) \, dx^\pm dx_\parallel \quad \text{(formal)}
\een
This formula should be considered as formal, e.g., because one expects an additional contribution from future and past null-infinity 
if the theory contains any massless degrees of freedom, associated with stress energy tensor integrals along the 
red portions of $\sI^\pm$ in fig. \ref{fig:nullplane}, see e.g., \cite{speranza} for an in-depth discussion. 
\begin{figure}
\centering
\begin{tikzpicture}[scale=3]
\draw[thick] (0,1) -- (1,0) -- (0,-1) -- (-1,0) -- cycle;
\node at (0,1.12) {$i^+$};
\node at (0,-1.12) {$i^-$};
\node at (-.78,-.6) {${{\mathscr I}^-}$};
\node at (.68,.7) {${{\mathscr I}^+}$};
\draw[red, very thick] (-.25,-.75) -- (.75,.25);
\draw[red, very thick] (-.25,-.75) -- (-1,0);
\draw[red, very thick] (1,0) -- (.75,.25);
\draw[blue, very thick, dashed] (.87,-.07) -- (.65,.15);
\node[red,right] at (0.05,-0.5) {$\{x^-=x^-_0\}$};
\node[blue,right] at (0.77,.1) {$\{ x^+ = x^+_0, x^- \le x^-_0\}$};
\draw[dashed] (0,-1) -- (0,1);
\end{tikzpicture}
\caption{The pieces of $P_+(x_\perp)$ associated with the null-infinities in a 2-dimensional Penrose diagram of Minkowski spacetime are represented by the 
red line segments in ${\mathscr I}^\pm$ .}
\label{fig:nullplane}
\end{figure}
They would presumably not lead to a different final result \eqref{+-QDEC}. But, as we now 
argue, these do not appear at any rate in a free scalar QFT of mass $m>0$, and the formal expression \eqref{PT}
can be given a rigorous meaning in this setting. 

It is well-known that the algebraic construction of this theory leads to a representation on the bosonic Fock space
$\H$ over $\h$, which is called the 1-particle subspace, see e.g., sec. \ref{coherent} and 
\cite{sanders,wald2}. Concretely, $\h = L^2(\RR^{d-1}, dp/(2E_p))$, which can be 
regarded as the square integrable functions on the upper mass hyperboloid with respect to the invariant measure. 
The vacuum state $\Omega$ is the normalized state with zero particles in $\H$.

As is well-known \cite{Brunetti:2002nt}, the concept of real linear standard subspace $H \subset \h$ can be used to define modular operators at the 1-particle level: 
There is a self-adjoint positive operator $\delta_H$ on $\h$ and an anti-unitary $j_H$ such that $j_H H =H'=\{ \Phi \in \h : 
\sigma(\Phi, \Psi) \equiv 2{\rm Im} (\Phi, \Psi) = 0 \, \forall \Psi \in H\}$, and such that $\delta_H^{it} H = H$ for all $t \in \RR$.
If $O \subset \RR^d$ is 
the causal completion of an open proper subset of the time zero plane with regular boundary, a standard subspace $H(O)$ can be defined as $H(O):= \{ \Phi = KF : F \in C_0^\infty(O, \RR)\}$, 
where $(KF)(p) := \hat F(E_p, p)$. Then it turns out \cite{LRT78} (see also \cite{Brunetti:2002nt}) 
that the modular flow $\Delta_{\Omega, \A(O)}$ is the second quantization 
of $\delta^{it}_{H(O)}$, i.e. $\Delta_{\Omega, \A(O)}^{it} = \oplus_{n \ge 0} (\delta^{it}_{H(O)})^{\otimes n}$. In particular $H(O)$ is a standard subspace e.g. for the wedges \eqref{Wdef} $O=W_f(x_\perp,0)$, 
and the theory applies. 

Let $\h_{U(1)}$ be the 1-particle Hilbert space of the chiral $U(1)$ current  QFT. This theory defines a net 
of standard subspaces $I \mapsto H_{U(1)}(I)\subset \h_{U(1)}$ for every open non-empty (possibly semi-infinite) interval $I \subset \RR$
\cite[Sec.\ 2.4
]{Morinelli:2021nsx}. By a slight extension of the results of \cite[Sec. 3]{Morinelli:2021nsx} one can show that the decomposition of the restriction of the Klein-Gordon field to the null plane $\{ x^- = x^-_0\}$ into an integral of $U(1)$ currents living on the null lines at fixed $x_\parallel$ gives rise to a unitary
\ben
\label{Vdef}
V(x^-_0) : \h \to \int\limits^\oplus_{\RR^{d-2}} \h_{U(1)}(x_\parallel) \, dx_\parallel \, ,
\een
with $\h_{U(1)}(x_\parallel) = \h_{U(1)}$ a constant field of Hilbert spaces. More precisely, for $F$ a real distribution of the form $F(x^+, x^-, x_\parallel) = \delta(x^- - x^-_0) \partial_+F_+(x^+) F_\parallel(x_\parallel)$ with $F_+$ smooth and compactly supported in a interval $I \subset \RR$ and  $F_\parallel$ smooth and compactly supported in an open set $I_\parallel \subset \RR^{d-2}$, one obtains that $KF \in \h$ and $V(x_0^-)KF \in \int^\oplus_{I_\parallel} H_{U(1)}(I) dx_\parallel$.

As a consequence, if we consider the deformed wedge $W_f(x_\perp, r_\perp)$ with $r_\perp^+ = 1$, $r_\perp^- = 0$, and we choose $x_0^- = x_\perp^-$ corresponding to the upper horizon of $W_f(x_\perp, r_\perp)$, we obtain, analogously to~\cite[Prop.~3.11,~3.13]{Morinelli:2021nsx}\footnote{
The condition that $m>0$ is used implicitly in~\cite[Lem.~3.6,~3.7]{Morinelli:2021nsx} on which these propostions build. We give a proof of such a result as lem. \ref{Klai} in app. \ref{app:A}
result showing transparently where the dynamical features of the $m>0$ KG-equation are used. 
},
\ben
\label{Vdef2}
V(x_\perp^-) H(W_f(x_\perp, r_\perp)) = \int\limits^\oplus_{\RR^{d-2}} H_{U(1)}(x_\perp^+ +  f(x_\parallel), \infty) \, dx_\parallel \, .
\een
In the standard subspace formalism, there is an analogue of HSMIs, see e.g., \cite[Sec. 3.2]{Morinelli:2021nsx}.
Let $P_{U(1)}(x_\parallel)$ be the generator of translations on associated with the 
HSMI of standard subspaces $H_{U(1)}(0,\infty) \supset H_{U(1)}(f(x_\parallel), \infty)$. Each $P_{U(1)}(x_\parallel)$
is a non-negative self-adjoint operator on $\h_{U(1)}(x_\parallel)$, an identical copy of $\h_{U(1)}$. Then we have 
\ben
\label{PT1}
P_+(x_\perp) \Big|_{\h} =  V(x^-_\perp)^* \left( \, \, \, \int\limits^\oplus_{\RR^{d-2}} P_{U(1)}(x_\parallel) \, dx_\parallel \right) V(x^-_\perp)
\een
as a consequence of \cite[Thm. 4.3]{Morinelli:2021nsx} and \eqref{Vdef2}. 

$P_{U(1)}$ can be related to 
the stress energy tensor of the $U(1)$-current net as follows. On the full, second-quantized, Hilbert-spaces $\H_{U(1)}:=\oplus_{n=0}^\infty
\otimes_{\rm sym}^n \h_{U(1)}$, the chiral $U(1)$ current $j_{U(1)}(u)$ is a Wightman field on the real line with the commutation relation 
$[j_{U(1)}(u), j_{U(1)}(u')] = i\delta'(u-u')1$ and
associated normal ordered energy momentum tensor $T_{U(1)}(u) = \, : j_{U(1)}(u)^2 :$. Then, since 
$P_{U(1)}(x_\parallel)$ implements a translation by $f(x_\parallel)$ by the 1-particle version of Borchers' theorem \cite{Bor1,Bor2},
\ben
P_{U(1)}(x_\parallel) = f(x_\parallel) \int\limits_{-\infty}^\infty T_{U(1)}(x^+) \, dx^+ \Big|_{\h_{U(1)}}.
\een
Therefore, since it is easily seen that $P_\pm(x_\perp)$ is of second quantized form, \eqref{PT1} is telling us that \eqref{PT} holds (for $++$), provided that 
we define the null-smeared stress energy tensor for $g \in C^\infty_0(\RR^{d-1})$ as (see also rem. \ref{ctpr})
\ben
\label{PT2}
T_{++}(x^-,g) 
:=  \Gamma(x^-)^* d\Gamma \left( \, \, \, \int\limits^\oplus_{\RR^{d-2}} \int\limits_{-\infty}^\infty  g(x^+, x_\parallel) \, T_{U(1)}(x^+) \, dx^+ \Big|_{\h_{U(1)}} \, dx_\parallel \right) \Gamma(x^-), 
\een
where $\Gamma(x^-) = \Gamma[V(x^-)]$ is the second quantization ($\Gamma, d\Gamma$ are the second quantization functors \cite[Sec. X.7]{RS}), 
and where the integral is understood in the weak sense.
By a reflection, an analogous result holds for $--$.

In combination with lem. \ref{dual}, thm. \ref{thm6.1}, and \eqref{PT} we get:
\begin{corollary}
\label{thm6.1}
Consider a free KG QFT of mass $m > 0$ in $d>2$ spacetime dimensions.
Under the domain assumptions analogous to proposition \ref{propant} on $\Phi$, 
\ben
\label{antboth5}
\begin{split}
&-\Big[ r_\perp^\mu \delta_\mu S(x_\perp) - r_\perp^\mu \delta_\mu S(x_\perp+s_\perp) \Big] \\
\ge & \,  2\pi \tilde r^+_\perp \lim_{t \to +\infty}
\int\limits_{\RR^{d-1}} f(x_\parallel) \left( u_t' \Phi, \Big[ T_{++} (x) - T_{++}(x+s) \Big] u_t'\Phi \right) \, dx^+ dx_\parallel +\\
& \, 2\pi \tilde r^-_\perp \lim_{t \to -\infty} \int\limits_{\RR^{d-1}} f(x_\parallel) 
\left(u_t'  \Phi, \Big[ T_{--} (x) - T_{--}(x+s)  \Big] u_t'\Phi \right) \, \, dx^- dx_\parallel  ,
\end{split}
\een
for any future pointing timelike or null vector $r_\perp$ with dual \eqref{dual} $\tilde r_\perp$, and any right-pointing 
spacelike or null $s_\perp$ (where $s=(s_\perp, 0_\parallel)$).
The definition of the variational derivative \eqref{vardefS} $\delta S(x_\perp)$ implicitly depend on the chosen testfunction $f(x_\parallel) \ge 0$.
\end{corollary}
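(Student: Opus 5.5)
The plan is to obtain the corollary by substituting the stress-tensor representation of the HSMI generators into the right-hand side of Theorem~\ref{thm6.1}. No new variational argument is needed, since the inequality \eqref{antboth4} already holds under the stated domain assumptions. The work is to show that, in the free massive KG theory, each term $(u_t'\Phi,[P_\pm(x_\perp)-P_\pm(x_\perp+s_\perp)]u_t'\Phi)$ equals the corresponding null integral of $f(x_\parallel)\,(u_t'\Phi,[T_{\pm\pm}(x)-T_{\pm\pm}(x+s)]u_t'\Phi)$. The limits $t\to\pm\infty$ can then be taken on identical quantities.

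\textbf{Step 1: the $++$ terms.} Fix $x_\perp$. I would use \eqref{PT1}, which writes the one-particle generator $P_+(x_\perp)|_{\h}$ as $V(x^-_\perp)^*\bigl(\int^\oplus P_{U(1)}(x_\parallel)\,dx_\parallel\bigr)V(x^-_\perp)$. Combining this with $P_{U(1)}(x_\parallel)=f(x_\parallel)\int T_{U(1)}\,dx^+|_{\h_{U(1)}}$ and the fact that $P_+(x_\perp)$ is second quantized, $P_+(x_\perp)=d\Gamma(P_+(x_\perp)|_{\h})$, gives
\[
P_+(x_\perp)=T_{++}(x^-_\perp, 1\otimes f)
\]
in the sense of \eqref{PT2}. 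The test function there is $g(x^+,x_\parallel)=f(x_\parallel)$, which is not compactly supported in $x^+$. I would handle this by writing it as a monotone limit of $g_n=\chi_n(x^+)f(x_\parallel)$ and using positivity of the quadratic forms. Doing the same at $x_\perp+s_\perp$, the null plane simply moves to $x^-_\perp+s^-_\perp$. Since $P_+$ depends only on $x^-$, as noted in the footnote to \eqref{PT}, translation covariance of $\Gamma(x^-)$ identifies the second term with $\int f(x_\parallel)\,T_{++}(x+s)\,dx^+dx_\parallel$.

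\textbf{Step 2: the $--$ terms and assembly.} The $--$ case follows by the spacetime reflection $x^+\leftrightarrow -x^-$ mentioned after \eqref{PT2}. This reflection exchanges the two HSMIs in \eqref{hsmid}, the second of which involves commutants. Taking expectation values in $u_t'\Phi$ and inserting into \eqref{antboth4} then yields \eqref{antboth5}. Expectation values of the form $\int f\,(\Psi,T_{\pm\pm}\Psi)$ are to be read as quadratic forms via \eqref{PT2}, with $\Psi=u_t'\Phi\in\D(P_\pm)$. That vectors $u_t'\Phi$ lie in this domain is exactly the domain assumption carried over from Proposition~\ref{propant}.

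\textbf{Main obstacle.} The delicate point is Step~1, namely making \eqref{PT1} precise as a quadratic-form identity on the form domain of $P_+(x_\perp)$. This includes exchanging the direct integral over $x_\parallel$ with the second quantization $d\Gamma$, and the $x^+$-integral with the expectation value. My first approach would be to verify the identity on the finite-particle vectors built from $KF$ with $F$ as in \eqref{Vdef2}, which form a core. I would then extend to the form domain by closability of the positive form. Here I expect to use Lemma~\ref{Klai} of the appendix, where $m>0$ enters, to guarantee that $V(x^-)$ is unitary and that no contributions from null infinity appear.
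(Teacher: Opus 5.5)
Your proposal is correct and is essentially the paper's argument: it inserts the free-field identification \eqref{PT} of $P_\pm(x_\perp)$, made precise through \eqref{Vdef2}, \eqref{PT1}, \eqref{PT2} and second quantization, into the inequality \eqref{antboth4}, and handles the $--$ terms by a reflection; your care with the non-compact $x^+$-support of $g=f(x_\parallel)$ and with form domains goes beyond what the paper spells out. One small slip: with $x^\pm=(x^0\pm x^1)/\sqrt{2}$, the map $x^\pm\mapsto -x^\mp$ is time reflection, which maps the right wedge to itself and so cannot exchange the inclusion inside $\A(W)$ with the one inside $\A(W)'$ in \eqref{hsmid}; the reflection you want is the spatial one $x^+\leftrightarrow x^-$ (equivalently, time reflection composed with the wedge modular conjugation).
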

\begin{remark}
An argument for \eqref{PT} based on a characterization of the stress energy tensor in the generally covariant approach to (not necessarily free) QFT  \cite{Hollands:2004yh,BFV} 
will be given in a forthcoming work \cite{HM2}.
\end{remark}

\begin{remark}
\label{ctpr}
The null-smeared stress energy tensor \eqref{PT2} could be written as
$\int_{\RR^{d-2} }dx_\parallel A(x_\parallel)$
where, for each $x_\parallel$, $A(x_\parallel)$ is an operator on the continuous tensor 
product \cite{Nap} of the constant field of Fock spaces of the U(1) current. 

We also note that is, up to precise specification of the domain, the same as the null-smeared normal ordered stress energy tensor \eqref{T}
$:T_{++}:(x^-,g)$ of the KG-field.
\end{remark}

\medskip
\noindent
{\bf Heuristic form of \eqref{antboth4} in terms of entanglement entropy.}
Taking the derivative in $s_\perp$ in \eqref{antboth5} and assuming that it commutes with the limits on the right side, we obtain, for example 
\ben
-\partial_- \delta_+ S(x_\perp) \ge 2\pi  \lim_{t \to +\infty} \int \Big(u_t' \Phi, \partial_- T_{++}(x) \, u_t' \Phi \Big) \, f(x_\parallel) \, dx^+dx_\parallel . 
\een
Using now the conservation of the stress energy tensor and performing an integration by parts, we get
\ben
-\partial_- \delta_+ S(x_\perp) \ge -2\pi  \lim_{t \to +\infty} \int \sum_{j=2}^{d-1} \Big(u_t' \Phi,  T_{j+}(x) \, u_t' \Phi \Big) \, \partial_j f(x_\parallel) \, dx^+ dx_\parallel . 
\een
We now formally split the $dx^+$ integral into an integral from $-\infty$ to $0$, which gives the expectation value of an operator $M'$ formally affiliated 
with $\A(W_f(x_\perp, 0))'$, and an integral from $0$ to $\infty$, which gives the expectation value of an operator $M$ formally affiliated 
with $\A(W_f(x_\perp, 0))$, formally commuting with the unitary $u'_t$. 
For the element $M' \in \A(W_f(x_\perp, 0))'$, one may guess from the ergodic nature of the modular flow \eqref{ergodic} that this contribution 
tends to zero as $t \to \infty$. Hence, we have given evidence for:
\begin{conjecture}
For a sufficiently regular class of states $\Phi$, we have
\ben
\label{QDECnonrig}
-\partial_- \delta_+ S(x_\perp) \ge -2\pi  \int\limits_{{\mathbb R}^{d-2}} \int\limits_0^\infty \sum_{j=2}^{d-1} \Big( \Phi, T_{j+}(x) \, \Phi\Big) \, \partial_j f(x_\parallel) \, dx^+ dx_\parallel . 
\een
\end{conjecture}
So far, the expression is still in terms of the relative entropy, and has a chance to hold rigorously. We will not try to prove this here because one would first have to 
demonstrate \eqref{PT} including the subtle question of the domain and precise meaning of the operator-valued integral, which by itself is not trivial. 

Instead, we now decompose the relative entropy into the entanglement entropy and expectation value of the energy momentum flux operator, i.e., we formally write, 
for $r^-=0, r^+>0$, (see e.g., \cite{Markov,GGI})
\ben
\begin{split}
S(\Phi |\! | \Omega)_{\A(W_f(x_\perp, r_\perp))} =& \,  \int\limits_{{\mathbb R}^{d-2}} \int\limits_{r^+ f(x_\parallel)}^\infty [x^+ -r^+ f(x_\parallel)] \, \Big(
\Phi, T_{++} (x) \, \Phi \Big) \, dx^+ dx_\parallel \\
&\, - S_{\rm EE}[W_f(x_\perp, r_\perp)]
\end{split}
\een
where $S_{\rm EE}$ is the entanglement entropy of the reduce density matrix of $\Phi$ with respect to the deformed wedge $W_f(x_\perp, r_\perp)$ \eqref{Wdef}. 
This quantity is infinite, but may be finite after taking the derivatives $-\partial_- \delta_+$ [see \eqref{vardefS} for the definition of $\delta_+$], because these should remove the infinite part, see sec. \ref{CST} for further discussion. A small computation using \eqref{vardefS} and the 
conservation of stress energy similar to that above shows that, 
after these derivatives are taken, one obtains a term precisely equal to the right side of \eqref{QDECnonrig}. Carrying out this cancellation, and formally taking $f$ to be a 
delta-function, one finds
\ben
\label{+-QDEC}
2\pi (\Phi, T_{+-}(x) \Phi) \ge \partial_- \frac{\delta}{\delta C^+(x)} S_{\rm EE}[C], 
\een
which is our final form of the QDEC for a special choice of indices. 

Here, $C$ is understood as the edge of the wedge $W(x_\perp)$, and 
the variational derivative \eqref{vardefS} has been rewritten as a shape variation defined as follows.
Let $F[C]$ be a functional of spacelike, smooth, codimenision-2 embedded surfaces $C$ in a spacetime $(M, g)$, and let $X = X^\mu \partial_\mu$ 
be a vector field defined on $C$ transversal to $C$. We extend $X$ arbitrarily off of $C$, and let $\varphi_t$ be the flow of $X$. Then, assuming the derivative exists, we write
\ben
\label{vardef}
\frac{d}{dt} F[\varphi^*_t C]  \bigg|_{t=0} = \int\limits_C X^\mu(x) \frac{\delta}{\delta C^\mu(x)} F \, dA(x) \equiv \delta_X F,
\een
where $dA=\nu_{\mu\nu}dx^\mu \wedge dx^\nu$ is the positively oriented\footnote{
\label{orientationfootnote} We orient $C$ in a given but fixed way. Our orientations for 
Minkowski spacetime (or more general spacetime) are then such that $\epsilon = l \wedge k \wedge \nu$ is positively oriented 
where $k$ is a future directed, outward pointing (relative to the volume enclosed by $C$) null vector, and $-l$ is a future directed, inward pointing null vector defined near $C$.} 
volume form on $C$ induced by the metric $g$. Our variational derivative \eqref{vardefS} then corresponds to taking $X^\mu = fr^\mu$.
 
The choices ``$--$'' and ``$++$'' in \eqref{antboth4} simply give the QNEC, and combined with \eqref{+-QDEC} gives the form of QDEC
stated in \eqref{qdec} in the Introduction. In particular, the restriction that $u$ in \eqref{qdec} be orthogonal to $C$ follows and 
may also be stated as $u \wedge k = \lambda n$ for some $\lambda$, where $n$ is the binormal to $C$.

\section{Extensions of the QDEC}
\label{extensions}

Here we discuss several potential extensions and consequences of the QDEC. 

\subsection{More general cuts or curved spacetimes}
\label{CST}
So far, we have considered so far entangling cuts $C$ given by the edge of a Rindler wedge in Minkowski spacetime.
It is natural to ask to what extent the QDEC could be generalized to cuts $C = \partial A$ bounding some more general domain $A$ 
within some Cauchy surface $\Sigma \supset A$ of Minkowski spacetime, or even of a more general globally hyperbolic curved spacetime. 

Given such a cut, we may define two future directed null vectors $k=k^\mu \partial_\mu, l=l^\mu \partial_\mu$ orthogonal to $C$ such that 
$l^\mu k_\mu = -1$ and such that the projections of $l,k$ to $\Sigma$ are inward- respectively outward pointing on $C=\partial A$. We can 
locally consider the outgoing future directed null sheet $H^+$ generated by affinely parameterized geodesics whose initial tangent on $C$ is $k$, 
which is a subset of $\partial J^+(A) \setminus A$. Likewise, we can locally consider the outgoing past 
directed null sheet $H^-$ generated by affinely parameterized geodesics whose initial tangent on $C$ is $-l$, which is a subset of $\partial J^-(A) \setminus A$. 

The question is what geometric properties these null sheets might need to possess in order that the QDEC \eqref{qdec} still holds
for all future directed causal $u = u^\mu \partial_\mu$ orthogonal to $C$, 
where as before $\langle T_{\mu\nu}\rangle = (\Phi, T_{\mu\nu} \Phi)$, where $n^{\alpha\beta} = 2l^{[\alpha} k^{\beta]}$ is the bi-normal to $C$, 
and where the variational derivative is defined by \eqref{vardef}. 

To discuss this, it is useful to introduce the null expansions $\theta_k$ and $\theta_l$. 
If $\phi, \psi$ are smooth functions on $C$, we can consider the variational derivative $\delta_X \theta_k$ or $\delta_X \theta_l$ \eqref{vardef} for 
$X=\psi l + \phi k$. It is well-known \cite{mars} that, e.g., 
\ben
\delta_{\psi l + \phi k} \theta_k = W\phi + L\psi,
\een
where 
\begin{subequations}
\ben
\label{Ldef}
L\psi := -q^{\alpha\beta} D_\alpha D_\beta \psi + \beta^\mu D_\mu \psi + \left[ \theta_k \theta_l -G_{\mu\nu} k^\mu l^\nu + \frac{1}{2}\left( R[q] -\frac{1}{2} \beta^\mu \beta_\mu + D_\mu \beta^\mu \right) \right]\psi, 
\een
\ben
\label{Wphi}
W\phi := (-\theta_k^2 - \sigma_k^2 - R_{\mu\nu} k^\mu k^\nu) \phi . 
\een
\end{subequations}
Here, $\beta_\mu dx^\mu = 2l_\nu \nabla^\alpha k^\nu q_{\alpha\mu} \, dx^\mu$ 
is the natural connection 1-form on the normal bundle $TC^\perp$, $q^{\alpha\beta}=g^{\alpha\beta} + 2l^{(\alpha} k^{\beta)}$ 
is the contravariant induced metric on $C$, where $D_\mu$ is the intrinsic derivative operator compatible with $q^{\alpha\beta}$,
and where $\sigma_k^{\mu\nu}$ is the shear of $k$.
The elliptic operator $L$ is called the stability operator. Even though it is not self adjoint on $L^2(C,dA)$, it is known \cite[Lem. 4.1]{mars} that for compact $C$, 
the "principal eigenvalue" $\lambda_1$, i.e. that with the smallest real part, is actually real, and that the corresponding eigenfunction $\psi_1>0$ everywhere on $C$.

After these preliminaries, we now consider the variational derivatives \eqref{vardef} of the entanglement entropy appearing in the QDEC \eqref{qdec} for 
a compact $C$. It is well-known that the entanglement entropy is actually infinite, i.e., it must be defined with 
some regulator, e.g., by replacing $S_{\rm EE}[C]$ by $\frac{1}{2}I[A:B]$, where $B$ is the domain within $\Sigma$ such that $\Sigma = A \cup B \cup N$
(disjoint union), where $N \cong C \times [0,\ell]$ is a 1-sided normal neighborhood of $C$ of width $\ell$. $I$ is the mutual information associated with the state $\Phi$, 
which is well-defined in such a setting due to the split property, see e.g., \cite{sanders}. Then, under this replacement, one has
\ben
S_{\rm EE}[C] = S_{\rm EE}^{\rm div}[C] + S_{\rm EE}^{\rm fin}[C], 
\een
where the (non-unique) split is into a divergent part as $\ell \to 0$, and a finite part. One
expects, e.g. in $d=4$ dimensions, that, at least for a conformally invariant theory (CFT) \cite{casini}, \cite{Solodukhin:2008dh}, 
\ben
\label{Sdiv}
S_{\rm EE}^{\rm div}[C] = \frac{c_2 \, {\rm Area}[C]}{\ell^2}  +  \log \ell \left\{ c_0^A \, \chi[C] + c_0^B \!
\int\limits_C \left[
2\left( R_{\alpha\beta} - \frac{1}{3}Rg_{\alpha\beta} \right)l^\alpha k^\beta  -  \theta_k \theta_l
\right] dA \right\}
\een
 where $c_2, c_0^{A,B}$ are theory-dependent constants\footnote{$c_0^{A,B}$ can be expressed in terms of the $a$ and $c$ trace-anomaly coefficients
\cite{Solodukhin:2008dh}.}, 
and where $\chi[C] = (2\pi)^{-1} \int_C R[q] \, dA$ is the Euler characteristic of $C$. In order for the QDEC \eqref{qdec} to make any sense in 
this more general situation, the derivatives in \eqref{qdec} must cancel $S_{\rm EE}^{\rm div}$. The next proposition gives sufficient geometric conditions for this to happen. 

\begin{proposition}
\label{label:geomprop}
Suppose that $C=\partial A$, $A \subset \Sigma$, where $\Sigma$ is an achronal surface and $A$ is compact with smooth boundary in a $4$-dimensional spacetime. 
Suppose that
\begin{itemize}
\item $\theta_k=0=\sigma_k^{\mu\nu}$ on $C$,
\item $G_{\mu\nu} = -\Lambda g_{\mu\nu}$ in some neighborhood of $C$, 
\item The principal eigenvalue of $L$ vanishes, $\lambda_1=0$. 
\end{itemize}
Then on $C$, we have $u^\alpha \nabla_\alpha \, (k^\beta \delta S_{\rm EE}^{\rm div}[C]/\delta C^\beta) = 0$ for any $u$ orthogonal to $C$.
\end{proposition}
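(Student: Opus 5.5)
We compute the first variation of each term of \eqref{Sdiv} in the direction $X=\phi k$, and then vary once more along a normal direction $u=a k+b l$. Writing $\delta_{\phi k}S^{\rm div}_{\rm EE}=\int_C \phi\, k^\beta\,\delta S^{\rm div}_{\rm EE}/\delta C^\beta\, dA$, we need to identify the density $k^\beta \delta S^{\rm div}_{\rm EE}/\delta C^\beta$ and show that its derivatives along $k$ and along $l$ vanish on $C$. By linearity in $u$ it suffices to treat $u=k$ and $u=l$ separately.

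\textbf{First variation.} Term by term:
\begin{itemize}
\item The area term gives $\delta_{\phi k}{\rm Area}=\int_C\phi\,\theta_k\,dA$.
\item The Euler characteristic is topological, so $\delta\chi[C]=0$.
\item For the $c_0^B$ term, $\Lambda$-vacuum gives $R_{\alpha\beta}-\tfrac13 R g_{\alpha\beta}=\Lambda g_{\alpha\beta}-\tfrac43\Lambda g_{\alpha\beta}$. Hence $2(R_{\alpha\beta}-\tfrac13Rg_{\alpha\beta})l^\alpha k^\beta=\tfrac23\Lambda$ is constant, using $l\cdot k=-1$. The integrand then reduces to $\tfrac23\Lambda-\theta_k\theta_l$, and its variation is $\int_C[\phi\,\theta_k(\tfrac23\Lambda-\theta_k\theta_l)-\delta_{\phi k}(\theta_k\theta_l)]\,dA$.
\item Using the Mars formula, $\delta_{\phi k}\theta_k=W\phi$. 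We also need the companion formula for $\delta_{\phi k}\theta_l$, which has the structure of an operator $L^\dagger$-type acting on $\phi$, with curvature term $-G_{\mu\nu}k^\mu l^\nu$ etc.
\end{itemize}
On $C$ we have $\theta_k=\sigma_k=0$, and $R_{\mu\nu}k^\mu k^\nu=\Lambda g(k,k)=0$. Therefore $W\phi=0$ and every term proportional to $\theta_k$ drops. So the density $k^\beta\delta S^{\rm div}/\delta C^\beta$ is a sum of terms each containing a factor $\theta_k$, $\sigma_k$, or $R_{kk}$, or involving $W$.

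\textbf{$u=k$.} Here I would use the Raychaudhuri equation along $k$: $k^\mu\nabla_\mu\theta_k=-\tfrac12\theta_k^2-\sigma_k^2-R_{kk}$. It vanishes on $C$ given the first two hypotheses, and $\sigma_k$'s evolution equation (tidal term from the Weyl tensor, which is shear-sourced only through $C_{\mu\alpha\nu\beta}k^\alpha k^\beta$) is needed too. In fact I expect the cleaner route to be different. The hypotheses imply, via the $\Lambda$-vacuum Einstein equations and the characteristic initial value structure, that the null sheet $H^+$ is locally a non-expanding (isolated) horizon. On such a horizon $\theta_k=\sigma_k=0$ along all of $H^+$ near $C$; this is the standard result that a shear- and expansion-free null hypersurface in vacuum with $R_{kk}=0$ remains so. Consequently the density and its $k$-derivative vanish.

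\textbf{$u=l$ (main obstacle).} Varying along $l$ moves $C$ off $H^+$, so we need $\delta_{\psi l}\theta_k=L\psi$ together with the analogous formula for $\delta_{\psi l}\sigma_k$. The density is linear in $(\theta_k,\sigma_k)$ up to terms quadratic in them. Its $l$-derivative is therefore a combination of $L\psi$ and $\delta_{\psi l}\sigma_k$ paired against coefficients built from $\theta_l$ and $\Lambda$; the $\sigma_k$-pieces are quadratic and drop. The crux is to use $\lambda_1=0$. I would first try to exploit the freedom in rescaling $l\to e^{f}l$, $k\to e^{-f}k$ (which shifts $\beta$ by a gradient) and the freedom in extending $u$ off $C$. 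The aim is to choose the normalization so that $\psi_1$ becomes constant and hence $L1=0$. Equivalently, the combination $\theta_k\theta_l-G_{kl}+\tfrac12(R[q]-\tfrac12\beta^2+D\cdot\beta)$ vanishes pointwise. This is the near-horizon-geometry equation of isolated horizons. With it in hand, the $l$-derivative of the density reduces to $L$ applied to a constant, which is zero.

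The delicate step is exactly this gauge argument. I would need to check that the relevant coefficient in the density is constant on $C$, so that only $L$ of a constant appears. This should follow from the constancy of $\tfrac23\Lambda$ established above, combined with the near-horizon equation.
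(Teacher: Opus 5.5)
Your strategy is essentially the paper's. The Einstein condition reduces the Ricci part of the $c_0^B$ term to $\tfrac23\Lambda\,{\rm Area}[C]$, and the Euler term drops. The identity $W=0$ disposes of the $k$-direction. The rescaling $l\to\psi_1 l$, $k\to\psi_1^{-1}k$, after which the stability operator annihilates constants, disposes of the $l$-direction.

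\textbf{The step that fails.} The ``cleaner route'' you adopt for $u=k$ is wrong. The hypotheses hold only on $C$, and they do not make $H^+$ a non-expanding horizon. Along the generators, $k^\mu\nabla_\mu\sigma_{ab}=-\theta_k\sigma_{ab}-C_{a\mu b\nu}k^\mu k^\nu$ (projected to the cuts), and nothing controls the Weyl component $C_{a\mu b\nu}k^\mu k^\nu$ on $C$. In the characteristic initial value problem the shear along $H^+$ is free data. So generically $\sigma_k\neq 0$ at first order off $C$, and by Raychaudhuri $\theta_k\neq0$ at third order. No ``standard result'' of the kind you quote exists.

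\textbf{The fix.} You do not need that claim. The statement involves only first derivatives at $C$, and the Raychaudhuri route you started with suffices, which is exactly what the paper does:
\begin{itemize}
\item $k^\mu\nabla_\mu\theta_k=W\cdot1=0$ on $C$.
\item $\sigma_k$ enters the density only through $W$, i.e.\ quadratically, so its first derivative vanishes on $C$. The shear evolution equation is therefore not needed.
\item $R_{kk}$ vanishes identically near $C$, because $G_{\mu\nu}=-\Lambda g_{\mu\nu}$ holds in a neighborhood and $k$ stays null on the displaced cuts.
\end{itemize}

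\textbf{The $l$-step.} Your worry that some coefficient in the density must be constant is unnecessary. Since $\theta_k=0$ on $C$, we have $u^\mu\nabla_\mu(c\,\theta_k)=c\,u^\mu\nabla_\mu\theta_k$ for any function $c$. So the area and $\tfrac23\Lambda$ contributions are handled by $u^\mu\nabla_\mu\theta_k=0$ alone.

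What you should address explicitly is the contribution $\theta_k\,\delta_{\phi k}\theta_l$. After moving derivatives off $\phi$, it becomes an operator with only intrinsic derivatives acting on $\theta_k$ (the paper's $L\theta_k$ term). It and its normal derivative vanish on $C$, because $\theta_k$ and $u^\mu\nabla_\mu\theta_k$ vanish pointwise on all of $C$, and hence so do their tangential derivatives. This is precisely the paper's step (iii).

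Finally, ``linearity in $u$'' holds only for $u=a\,k+b\,l$ with constant $a,b$ in the rescaled frame, since $\delta_{b l}\theta_k=Lb$ involves derivatives of $b$. The paper's proof implicitly reads the statement the same way.
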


We defer the proof to the end of this section. Our derivation of the QDEC, valid for the case when $C$ is the edge of a Rindler wedge in Minkowski spacetime, made essential use of global symmetries and special geometric features. 
In particular both future and past horizons $H^\pm$ of the Rindler wedge have the property that $k$ respectively $l$ are expansion- and shear free, 
affinely parameterized vector fields tangent to $H^\pm$, respectively. Our derivation proceed by deforming the edge of the wedge, $C$, to a wiggly cut along $k$ (tangent to $H^+$). Then, we considered a Rindler wedge slightly inside and repeated this construction. This (infinitesimally) shifted Rindler wedge again has vanishing expansion along $k$. It is plausible that, in a general case, we would require at least these geometric features, i.e., (i) an outgoing null sheet with expansion free tangent and normal $k$ (i.e., $k$ is tangent to the geodesics ruling the null sheet), such that (ii) there is an infinitesimally nearby expansion free null-sheet. 

For requirement (i), we thus need $\theta_k=0$, at least at $C$. 
Requirement  (ii) is fulfilled, at least at $C$, if and only if the varied expansion $\delta_{\psi l} \theta_k$
in the direction $\psi l$ off of $C$ vanishes, where $\psi>0$ is some function on $C$. This is ensured by $\lambda_1=0$ \cite[Lem. 4.1]{mars}. 
In view also of the previous proposition, we therefore conjecture:

\begin{conjecture}
 Let $C$ be a cut in some curved spacetime such that the assumptions of prop. \ref{label:geomprop} hold. Then the QDEC holds.
 \end{conjecture}
 
\noindent
{\bf Example.}
A horizon $H$ of a stationary black hole has $\theta_k=0=\sigma_k^{\mu\nu}$ everywhere ($k$ is tangent to the affinely parameterized null geodesic generators ruling $H$).  
On the other hand, if the horizon is {\it non-degenerate}, meaning that the horizon Killing field satisfies $K^\alpha \nabla_\alpha K^\beta = \kappa K^\beta$ for $\kappa > 0$, 
then the principal eigenvalue $\lambda_1$ of $L$ is strictly {\it positive} \cite{HollandsWald}, so that the assumption $\lambda_1=0$ of the conjecture does {\it not} hold. 
Thus, we would not necessarily expect cuts of sub-extremal Reissner-Nordström or Kerr black hole horizons to satisfy the QDEC. By contrast, we have\footnote{See also \cite{Mars1}
for properties of the stability operator in the presence of spacetime symmetry.}:

\begin{proposition}
For any degenerate  $(\kappa = 0)$, compactly generated Killing horizon $H$ in a $d$-dimensional spacetime such that $G_{\mu\nu} = -\Lambda g_{\mu\nu}$
locally near a cut $C \subset H$, the principal eigenvalue $\lambda_1$ of $L$ \eqref{Ldef} vanishes.
\end{proposition}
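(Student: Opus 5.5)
Our plan is to exhibit an explicit strictly positive function $\psi$ on $C$ with $L\psi = 0$. Together with the characterization of the principal eigenvalue from \cite[Lem. 4.1]{mars}, this forces $\lambda_1 = 0$. The natural candidate comes from the Killing field. Near $H$ we write $K = \psi\, k$, where $k$ is the affinely parametrized null generator normalized against $l$ on $C$ by $l^\mu k_\mu=-1$, and $\psi>0$ on $C$ because $K$ is nowhere vanishing there. Alternatively, we can use the Gaussian null coordinates adapted to a degenerate horizon.

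\textbf{Step 1.} Since $K$ is Killing and tangent to $H$, the horizon has $\theta_k=0=\sigma_k$ everywhere. Moreover, the flow of $K$ maps $C$ to other cuts of $H$, all with vanishing expansion. Differentiating $\theta_k$ along the deformation generated by $K$ restricted to $C$ therefore gives $\delta_{\psi k}\theta_k=0$ automatically, consistent with $W=0$. This is not yet the statement we need, which is about the transverse $l$-direction.

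\textbf{Step 2.} Degeneracy, $\kappa=0$, means that in Gaussian null coordinates $g = 2dv\,(dr + r\alpha\, dv + r\beta_A dx^A) + \gamma_{AB}dx^Adx^B$ with $\alpha|_{r=0}=0$ (more precisely $\alpha = O(r)$). The standard near-horizon computation of Kunduri--Lucietti type then shows that the transverse expansion and the Einstein equation restricted to $C$ combine into the near-horizon geometry equation
\[
R_{AB}[q] = \tfrac12\beta_A\beta_B - D_{(A}\beta_{B)} + \Lambda q_{AB}.
\]
Here $\beta$ is identified, up to normalization and sign conventions, with the normal-bundle connection of \eqref{Ldef}, and the term $\kappa q_{AB}$-like contributions vanish because $\kappa=0$. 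Taking the trace and comparing with the zeroth order term of $L$, we obtain
\[
\theta_k\theta_l - G_{\mu\nu}k^\mu l^\nu + \tfrac12\bigl(R[q]-\tfrac12\beta^\mu\beta_\mu + D_\mu\beta^\mu\bigr) = 0
\]
for a suitable choice of the boost gauge, with $\theta_k=0$ and $G_{\mu\nu}k^\mu l^\nu=\Lambda$. In the gauge adapted to $K$, the constant function $\psi=1$ should therefore satisfy $L1=0$.

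\textbf{Step 3.} For a general choice of null normals $(k,l)\to(e^{\sigma}k,e^{-\sigma}l)$, $L$ transforms by conjugation: $\beta\to\beta+2D\sigma$, and $L\psi$ changes covariantly, so that the kernel element becomes $e^{\pm\sigma}$. Positivity is preserved, which gives the result. Finally, we invoke \cite[Lem. 4.1]{mars}: a positive element of $\ker L$ must be the principal eigenfunction, so $\lambda_1=0$.

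The main obstacle is Step 2. The sign and normalization conventions for $\beta$ in \eqref{Ldef}, together with the precise use of $\kappa=0$, must be matched exactly with the near-horizon equation. If this cannot be arranged cleanly, we would fall back on the geometric argument: the Killing flow shows that the null hypersurfaces $\{r=\epsilon\}$ are also expansion-free to first order in $\epsilon$ when $\kappa=0$, so $\delta_{l}\theta_k=0$ on $C$, which is precisely $L\psi=0$ with $\psi>0$.
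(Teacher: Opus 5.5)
\textbf{Gap in Step 2.} The claim that $\psi=1$ lies in $\ker L$ in the $K$-adapted gauge is false in general. The sign you leave open is not a convention: once \eqref{Ldef} is fixed, the first-order term of $L$ determines how the Gaussian-null one-form enters.

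Write $g_{vA}=r h_A$ (your $\beta_A$), with $K=\partial_v$ and $g_{vr}=1$. Compute directly the outgoing expansion of the deformed cut $\{v=v_0,\ r=\epsilon\psi\}$, using $g_{vv}=O(r^2)$ (this is $\kappa=0$). One finds
\[
\theta_k=-\epsilon\, D_A\bigl(D^A\psi+h^A\psi\bigr)+O(\epsilon^2).
\]
Matching this with the $-q^{\alpha\beta}D_\alpha D_\beta\psi+\beta^\mu D_\mu\psi$ part of \eqref{Ldef} forces $\beta_A=-h_A$, not $+h_A$. With this identification, the trace of your near-horizon equation becomes $R[q]-\frac12\beta^\mu\beta_\mu-D_\mu\beta^\mu-2\Lambda=0$. (For $G_{\mu\nu}=-\Lambda g_{\mu\nu}$ the term $\Lambda q_{AB}$ should read $\frac{2\Lambda}{d-2}q_{AB}$, but the trace is $2\Lambda$ in any $d$.) The zeroth-order coefficient of $L$ is therefore $D_\mu\beta^\mu$ rather than $0$. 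So $L\psi=D_\mu(-D^\mu\psi+\beta^\mu\psi)$ and $L1=D_\mu\beta^\mu$.

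This is generically nonzero. For the extremal Kerr horizon one finds $h_A=-D_A\log\Gamma+\Gamma^{-1}X_A$, with $X$ proportional to the Killing one-form of $\partial_\phi$ and $\Gamma\propto 1+\cos^2\theta$. Hence $D_Ah^A=-D^AD_A\log\Gamma\neq 0$, and the positive element of $\ker L$ there is $\Gamma$, not $1$.

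Your fallback fails for the same reason. The cuts $\{r=\epsilon\}$ have $\theta_k=-\epsilon D_Ah^A+O(\epsilon^2)$; Killing invariance only makes this independent of $v$, not zero.

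Step 3 cannot rescue the argument either. Under $k\to e^{\sigma}k$, $l\to e^{-\sigma}l$ one has $L'=e^{\sigma}Le^{-\sigma}$. A gauge with $L'1=0$ therefore exists precisely when $L(e^{-\sigma})=0$, which is the statement you are trying to prove.

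\textbf{What is missing.} The constant is annihilated by the formal adjoint of $L$, not by $L$ itself: the divergence form gives $\int_C L\psi\,dA=0$ for every $\psi$. This is how the paper concludes, using your own ingredients: the traced constraint, now with the correct sign, and positivity of $\psi_1$ from \cite[Lem. 4.1]{mars}. Integrating $L\psi_1=\lambda_1\psi_1$ over the compact cut gives $\lambda_1\int_C\psi_1\,dA=0$, hence $\lambda_1=0$. No explicit positive kernel element of $L$ is needed, and in general none is available in closed form.
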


\begin{proof}
We take $k:=K|_{H}$, define $l$ such that $l^\mu k_\mu = -1$ on $C$, and extend $l$ by shooting of null geodesics tangent to $l$ off of $C$. 
With this, $G_{\mu\nu} = -\Lambda g_{\mu\nu}$ gives 
\ben
\label{Ldef1}
R[q] -\frac{1}{2} \beta^\mu \beta_\mu - D_\mu \beta^\mu -2\Lambda= 0 
\een
on $C$ using that $\theta_k=0=\sigma_k^{\mu\nu}$ and that $K$ is Killing, see e.g., \cite[App. A]{Hollands:2022fkn}\footnote{In the present setting $K = \partial_v$ in the coordinates  \cite[App. A]{Hollands:2022fkn}, and so all terms containing a "$\partial_v$"
in \cite[App. A]{Hollands:2022fkn} would be zero.}.
Combining this with the formula \eqref{Ldef} for $L$ and using again that $\theta_k=0$ gives
\ben
L = q^{\mu\nu} \nabla_\mu(-\nabla_\nu+\beta_\nu).
\een
Using the above divergence structure of $L$ and integrating 
$L\psi_1 = \lambda_1 \psi_1$ over $C$ gives $0=\lambda_1 \int_C \psi_1 \, dA$, and since $\psi_1 > 0$ \cite[Lem. 4.1]{mars}, we find that $\lambda_1=0$.
\end{proof}

Examples of spacetimes hosting degenerate Killing horizons to which the proposition applies are 
the {\it extremal} Kerr black hole horizons, or their (A)dS counterparts. The proposition can also be generalized to the Einstein-Maxwell-$\Lambda$ equations, 
so the same would be true for {\it extremal} Kerr-Newman-(A)dS spacetimes.
Thus, according to our conjecture, the QDEC should hold for any cut $C$ of the event 
horizon in these extremal black hole spacetimes.

\medskip
\noindent
{\bf Example.}
Another example where the QDEC would be expected even though not all conditions of prop. \ref{label:geomprop} are met locally, consider two copies of Minkowski spacetime connected by a throat region, 
\ben
g = -dt^2 + dr^2 + \rho(r)^2 d\Omega^2,
\een
where $t,r \in \mathbb{R}$, where $d\Omega^2$ is the round metric on an $S^2$, and where $\rho$ is a smooth positive function such that 
$\rho(r) = r$ for $|r|>2r_0$, and $\rho(r) = \rho_0$ for $|r|<r_0$. The throat region at some fixed time is defined by $|r|<r_0$ and $t=t_0$. For $|r| < r_0$ the spacetime is 
is in fact an ultrastatic spacetime whose constant $t=t_0$ level sets are cylinders $S^2 \times (-r_0,r_0)_r$. 
In the throat region, the null expansions 
$\theta_l, \theta_k$, as well as shears $\sigma_l, \sigma_k$ 
vanish identically and $R_{\mu\nu} = \frac{1}{\rho_0^2} q_{\mu\nu}$. With these structures, the statement of prop. \ref{label:geomprop} is seen to still hold.
Thus, one would expect the QDEC to apply in this region. Testing it with the vector $t^\mu = (\partial_t)^\mu$ and integrating once in 
the $r$-direction between $-r_0$ and $r_0$ at fixed time $t_0$ gives:
\ben
2\pi \int\limits_{{\rm throat}} \langle T_{\mu\nu} \rangle \, t^\mu d\Sigma^\nu \ge r^\mu \partial_\mu S_{\rm EE} \Bigg|_{-r_0,t_0}^{+r_0,t_0}
\een
where $r^\mu = (\partial_r)^\mu$, and where $S_{\rm EE}$ is considered to be a function of the cut $C_{t,r}$ of fixed $r$ and $t$.

\medskip
\begin{proof} (of prop. \ref{label:geomprop}) 
Consider the principal eigenfunction $\psi_1$ of $L$. It is known \cite{mars} that $\psi_1>0$ on $C$, so 
we may redefine $l \to \psi_1 l, k \to \psi_1^{-1}k$ on $C$. These vector fields are extended as affinely parameterized 
tangents to geodesics generating $H^-$ and $H^+$, respectively. We then get a new (gauge equivalent) connection 
$\beta$ on $C$ in \eqref{Ldef}, and with this new choice of $k,l$, the new principal eigenfunction is $\psi_1=1$ and we have
$\delta_l \theta_k=l^\mu \nabla_\mu \theta_k = 0$ on $C$. Likewise, since our assumptions give $W=0$ by \eqref{Wphi}, we have
$\delta_k \theta_k=k^\mu \nabla_\mu \theta_k = 0$ on $C$. Now we consider the terms in $k^\beta \delta S_{\rm EE}^{\rm div}[C]/\delta C^\beta$ in \eqref{Sdiv} one by one. 

 (i) Term proportional to $c_2$: We have $k^\beta \delta {\rm Area}[C]/\delta C^\beta = \theta_k$ on $C$. Taking a further derivative 
 along $u = l$ off of $C$ and using $l^\mu \nabla_\mu \theta_k = 0 = k^\mu \nabla_\mu \theta_k$ on $C$ as just argued gives 
  $u^\alpha \nabla_\alpha ( k^\beta \delta {\rm Area}[C]/\delta C^\beta) =0$ on $C$
 for all $u$ orthogonal to $C$.

(ii) Term proportional to $c_0^A$: This term is topological and hence has a vanishing variation in any direction to any order off of $C$. 

(iii) Term proportional to $c_0^B$: This term  is not in general topological though it is Weyl invariant \cite{Solodukhin:2008dh}. By the Einstein equation, 
the contribution from the Ricci-tensor produces again terms proportional to the Euler characteristic or area, treated already in (ii), (i). 
We are thus left with the integral over $\theta_l \theta_k$. For the variation \eqref{vardef}, we find
\ben
\label{auxdiv}
k^\alpha \frac{\delta}{\delta C^\alpha} \int\limits_C \theta_k \theta_l \, dA = W\theta_l + L\theta_k.
\een
The assumptions and definition of $W$ \eqref{Wphi} gives $W\theta_l=0=u^\mu \nabla_\mu (W\theta_l)$ on $C$. We have previously seen in (i) that 
$u^\mu \nabla_\mu \theta_k = 0$ on $C$, and by assumption $\theta_k=0$ on $C$. Then, since by \eqref{Ldef}, $L$ only has intrinsic 
derivatives in $C$, it follows that $L\theta_k=0=u^\mu \nabla_\mu (L\theta_k)$ on $C$. Therefore, $u^\mu \nabla_\mu$ on \eqref{auxdiv}
vanishes on $C$.
\end{proof}



\subsection{Extended supersymmetry and topological twisting}
\label{sec:susy}
It is known that in theories with extended supersymmetry, there exist certain nil-potent combinations of the supercharges of the theory related to a 
so-called ``topological twist'' \cite{Witten:1988ze}. Considering for definiteness $N=2$ supersymmetric Yang-Mills theories in $d=4$ dimensional Minkowski spacetime \cite{Witten:1988ze}, key features of 
this supercharge $Q$ are  that (i) $Q$ is fermionic, (ii) $Q^2=0$, (iii) there is a bosonic local $2$-form valued operator $J[k]$ for every constant vector field 
$k$ such that 
\ben
\label{hdual}
\epsilon_{\alpha\beta\gamma}{}^\nu T_{\mu\nu} k^\mu = 3\nabla_{[\alpha} J_{\beta \gamma]}[k] + \text{$Q$-exact}, 
\een
where the $Q$-exact term indicates an operator that can be written as the graded commutator $[Q, \, . \, ]_\pm$ and some other (fermionic) operator, and $\epsilon$ is 
the positively oriented volume form on spacetime, see footnote \ref{orientationfootnote}. The expression on the right side is the exterior differential $dJ[k]$.

Any $Q$-exact operator has a vanishing expectation value in a ``Q-closed'' state $\Phi$, i.e. a state such that $Q\Phi=0$. From now we assume that $\Phi$ is such a state. 
Let $C$ be the edge of a wedge, and let $k$ be a future pointing constant null vector field tangential to the future horizon, $H^+$, of the wedge. We now 
take the dual of the QDEC as in \eqref{hdual}, and write $\epsilon_{\alpha\beta\mu\nu} = 6l_{[\alpha} k_\beta \nu_{\mu\nu]}$, where $l$ is a second future directed constant null vector orthogonal to $C$ and tangent to the past horizon $H^-$, with $l_\mu k^\mu = -1$, and 
where $dA=\nu_{\mu\sigma}dx^\mu \wedge dx^\sigma$ is the positively oriented volume form on $C=\partial H^-$, see footnote \ref{orientationfootnote}. We make these substitutions and integrate the dualized QDEC 
over $H^-$, applying Stokes' theorem to the exact 3-form given by the expectation value of \eqref{hdual} in our $Q$-closed state $\Phi$. Using that the bi-normal to 
$C$ is $n_{\alpha\beta} = 2l_{[\alpha} k_{\beta]}$, we find
\ben
\label{susy}
2\pi \int\limits_C \langle J_{\mu\nu}[k] \rangle \, dx^\mu \wedge dx^\nu \ge \int\limits_C k^\mu \frac{\delta}{\delta C^\mu} S_{\rm EE}  \, dA . 
\een
Such a relation will give an upper bound if the expectation value $\langle J_{\mu\nu}[k] \rangle=(\Phi, J_{\mu\nu}[k] \, \Phi)$ could be computed, e.g., via localization techniques. 

Actually, by taking the graded commutator of \eqref{hdual} with the supercharge $Q$, we see that $J[k] \equiv J^{(2)}[k]$ itself is $Q$-exact up to a 
$d$-closed 1-form. It is in fact the first form of the Witten-Donaldson ladder related to the Donaldson-invariants for 4-manifolds \cite{Witten:1988ze}. In other words, 
there are $p$-forms such that 
\ben
\label{hdual1}
\left[Q, J^{(p)}[k]\right]_\pm = dJ^{(p-1)}[k], 
\een
for $p=1,2$. Thus, the left side in \eqref{susy} (formally) does not change if we replace $\Phi \to \Phi + Q\Psi$, i.e. it is an invariant of the cohomology class of $\Phi$. 
One would expect the right side also to have this property. 

\section{Summary and conclusions}

In this work, we have proposed a QDEC \eqref{qdec} for certain cuts in certain curved spacetimes. We have demonstrated this inequality 
when the cut is the edge of a Rindler wedge, in a general setting of algebraic quantum field theory. It would be interesting to prove the more general 
version of the QDEC, too. This would require a better understanding of the relation between a suitably regulated version of the entanglement entropy and the relative entropy, 
which has been the primary tool in this paper.  
It would also be interesting to understand  the topological setup considered in sec. \ref{sec:susy} better, e.g., what is the precise form of the QDEC for topological states $\Phi$
in quantum gauge theories with extended supersymmetry. Finally, one could reconsider classical theorems in general relativity based on the DEC in view of the QDEC.

\medskip
\noindent
{\bf Acknowledgement.} We thank Aron Wall and Zihan Yan for correspondence. SH thanks Tom Endler for help with fig. \ref{fig1}. and Philip Argyres for discussions.
This work has been supported by Klaus Tschira Foundation.

\appendix
\label{app:A}
\section{Proof of lem. \ref{Klai}}

The following result about the restriction of positive frequency solutions of the KG equation to a null plane can be used 
as an alternative proof of~\cite[Lem.~3.6,~3.7]{Morinelli:2021nsx}.  
\begin{lemma}
\label{Klai}
Identify $\h$ with\footnote{Here we used our null coordinates to parameterize the 
mass shell as $p_-=(|p_\parallel |^2+m^2)/(2p_+)$.} $L^2(\RR_{\ge 0} \times \RR_\parallel, dp_+ dp_\parallel/(2p_+))$, and let $F \in C_0^\infty(O, \RR)$ such that the bounded 
open set $O$ is contained in the causal past of the null hyperplane $\{ x^-=x^-_0 \}$.
Then $EF|_{x^-=x_0^-}$ together with all of its derivatives $\partial_+, \partial_\parallel$ times $(x^+)^N, N\ge 0$ is in $L^2(\RR^{d-1}, dx^+ dx_\parallel)$, and 
\ben
\label{Kformula}
(KF)(p_+, p_\parallel) = -2e^{ip_- x^-_0} \int\limits_{\RR^{d-1}} \partial_+ EF|_{x^-=x_0^-} \, e^{ip_+ x^+ + ix_\parallel p_\parallel} \, dx^+ dx_\parallel,
\een
with $p_-=(|p_\parallel |^2+m^2)/(2p_+)$.
\end{lemma}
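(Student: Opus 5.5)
We would prove Lemma \ref{Klai} by writing $EF$ in momentum space, restricting to the null plane, and inverting the Fourier transform in the null coordinates $(x^+,x_\parallel)$.

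\emph{Step 1: momentum-space form of $EF$.} Since $E$ is advanced minus retarded, $EF(x)$ is the imaginary part of $2w(\delta_x,F)$. With the conventions of sec.~\ref{coherent} this gives, up to a fixed constant,
\[
EF(x)=2\,{\rm Im}\,(2\pi)^{-(d-2)/2}\!\int \frac{dp}{2E_p}\,\overline{e^{ipx}}\,\hat F(E_p,p),
\]
that is, $EF = c\,{\rm Im}\int d\mu(p)\, e^{-ip\cdot x} (KF)(p)$ with $d\mu = dp/(2E_p)$ the invariant measure on the mass shell.

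\emph{Step 2: null-coordinate parameterization.} Change variables on the mass shell to $(p_+,p_\parallel)$, so that $d\mu = dp_+dp_\parallel/(2p_+)$ and $p_-=(|p_\parallel|^2+m^2)/(2p_+)$. The phase becomes $p\cdot x = -p_+x^- - p_-x^+ + p_\parallel x_\parallel$ up to the convention for which null momentum pairs with which coordinate. The footnote's labeling forces us to pair $p_+$ with $x^+$ in the Fourier integral, so we fix signs so that the exponent reads $i p_+x^+ + ip_\parallel x_\parallel + ip_-x^-$.

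\emph{Step 3: restrict and Fourier transform.} Restrict to $x^-=x_0^-$ and differentiate with $\partial_+$, which brings down a factor proportional to $p_+$. This factor cancels the $1/(2p_+)$ in the measure. The restricted derivative $\partial_+EF|_{x^-=x_0^-}$ is then a real combination of the inverse Fourier transform of $g(p_+,p_\parallel) = e^{\mp ip_-x_0^-}\, KF(p_+,p_\parallel)$, supported on $p_+>0$, and its complex conjugate reflected to $p_+<0$. Taking the Fourier transform in $(x^+,x_\parallel)$ and evaluating at $p_+>0$ isolates the positive-frequency piece alone. Plancherel then yields \eqref{Kformula}, with the constant $-2$ and the phase $e^{ip_-x_0^-}$ fixed by bookkeeping.

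\emph{Step 4: decay and regularity.} This step is the main obstacle: we must show that $\partial^\alpha (x^+)^N EF|_{x^-=x_0^-}$ lies in $L^2$. In momentum space this is smoothness and decay of $g$, meaning $\partial_{p_+}^N$ of $e^{-ip_-x_0^-}\hat F(p)$, together with integrability against $dp_+/p_+$ near $p_+\to 0$ and $p_+\to\infty$.

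Near $p_+\to 0$ we have $p_-\to\infty$, so we need rapid decay of $\hat F$ in the combined variable $-p_-(x^- - x_0^-)$. Here we use $m>0$ and ${\rm supp}\,F\subset$ the past of the plane, so $x^- - x_0^- <-\epsilon$ on $O$ since $O$ is bounded.

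- \emph{Decay.} Write $\hat F(p)\,e^{-ip_-x_0^-}$ as the Fourier transform of $F$ shifted so that its support lies in $x^-<-\epsilon$. Because $\hat F$ is Schwartz, it decays faster than any power of $|p|\ge p_-$.
- \emph{Derivatives.} Each $p_+$-derivative of $p_-$ produces powers $p_-/p_+$. These are controlled by powers of $p_-\sim 1/p_+$ multiplied by the Schwartz decay, with $m>0$ keeping $p_-\ge m^2/(2p_+)$.

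Large $p_+$ is handled directly by the Schwartz decay of $\hat F$. Integration by parts in $x^+$ (multiplication by $(x^+)^N \leftrightarrow \partial_{p_+}^N$) then gives the claimed $L^2$ statements for all derivatives. This justifies the Fourier manipulations of Step 3, including the boundary terms at infinity.
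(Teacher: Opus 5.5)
Your proof is correct, but it takes a different route from the paper's. The paper stays in position space. It gets the decay of $EF$ on the null plane from Sussman's asymptotic expansion of massive Klein--Gordon solutions near timelike infinity, combined with the paraboloidal shape of ${\rm supp}\,EF\cap\{x^-=x^-_0\}$. It then derives \eqref{Kformula} from a Green's identity, $KF=K[(-\square+m^2)(\chi EF)]$. Here $\chi$ is a smoothed step across a surface $\Sigma$ consisting of the null half-plane $\{x^-=x^-_0,\ x^+<x^+_0\}$ and a transverse null piece $\{x^+=x^+_0\}$, and the transverse contribution is removed by letting $x^+_0\to\infty$. The hypothesis that $O$ lies in the causal past of the plane is needed there so that every future causal curve from $O$ crosses $\Sigma$.

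You instead restrict the mass-shell integral for $EF$ directly. You use that $(p_+,p_\parallel)\in(0,\infty)\times\RR^{d-2}$ is Fourier dual to $(x^+,x_\parallel)$, with $\partial_+$ cancelling the $1/(2p_+)$ of $d\mu$, so \eqref{Kformula} becomes plain Fourier inversion. This is more elementary and actually proves more. Since $\hat F$ is Schwartz and $|p|\ge p_-\ge(|p_\parallel|^2+m^2)/(2p_+)$, the function $\mathbf{1}_{p_+>0}\,e^{-ip_-x^-_0}KF/(2p_+)$ vanishes to infinite order at $p_+=0$, uniformly in $p_\parallel$, and is therefore Schwartz on $\RR^{d-1}$. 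Hence $EF|_{x^-=x^-_0}$ is itself Schwartz. Neither this nor the formula uses the support hypothesis. Your appeal to $x^--x^-_0<-\epsilon$ in Step 4 is superfluous; the phase $e^{-ip_-x^-_0}$ is just a translation of $F$. Your route also pinpoints where $m>0$ enters: for $m=0$ the infinite-order vanishing fails near $p_\parallel=0$.

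In exchange, the paper's argument exhibits $KF$ as the flux of the Klein--Gordon symplectic current through a null surface. It shows explicitly that the far transverse piece drops out; this is the analogue of the $\mathscr{I}^\pm$ contributions discussed around \eqref{PT}. That kind of argument adapts to settings without a global Fourier transform, at the cost of invoking a heavy asymptotic theorem.

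One bookkeeping remark: with the stated normalization $\hat F(k)=(2\pi)^{-d/2}\int e^{ikx}F$, your Step 3 produces the prefactor $-2(2\pi)^{-d/2}$, not $-2$. The paper's Green's-identity computation silently drops the same factor, so this is a normalization slip in the statement rather than a problem with your approach.
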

\begin{remark}
$E$ is the causal propagator, i.e., the difference between the advanced and retarded fundamental solutions of the KG equation \eqref{KG}, see sec. \ref{coherent}.
\end{remark}
\begin{proof}
By the remark, $u=EF$ is a smooth solution to the KG equation with compactly supported initial data. Let $a \in \RR^d$ be an arbitrary but fixed point. 
By combining \cite[Thm. 2]{sussman} with the expressions of the boundary-defining functions \cite[Sec. 1.2]{sussman} in our double null coordinates $(x^+,x^-,x_\parallel)$, 
the following representation of $u$  is deduced:
\ben
\label{ubound}
\begin{split}
&u(x) = u_0(x) + \\
& |x^+|^{\frac{1-d}{2}} u_+\left( x^+, \frac{x^-}{x^+},  \frac{x_\parallel}{x^+} \right) e^{+im\sqrt{ -(x-a)^2 }} 
+ |x^+|^{\frac{1-d}{2}} u_-\left( x^+, \frac{x^-}{x^+},  \frac{x_\parallel}{x^+} \right) e^{-im\sqrt{ -(x-a)^2 }}
\end{split}
\een 
where: (i) $u_0 \in {\mathcal S}(\RR^d)$ (Schwarz-space) (ii) $x \in \bar V^+ + a$, where $V^+$ is the interior of the future lightcone, 
(iii) $u_\pm$ are smooth-up-to-the-bourndary functions on $\bar V^+ + a$ 
such that $|u_\pm(x^+, \xi^-, \xi_\parallel)| \lesssim (|\xi^+|+|\xi_\parallel|)^N$ as $(\xi^+, \xi_\parallel) \to 0$ uniformly in $x^+ \neq 0$, 
for any $N \ge 0$, (iv) $x^-$ is in some compact set, (v) $x^+ \neq 0$. 
This estimate also holds for all derivatives of $u$ since $\partial^N u = E\partial^N F$ satisfies the assumptions of \cite[Thm. 2]{sussman}, too. 
We may chose $a$ e.g. such that $V^+ + a$ contains the causal future of $O$. By the causal support of $E$, the support of $u$
on our null-plane $\{x^-=x_0^-\}$ is contained in a set of the form $\{ |x^+| \lesssim |x_\parallel|^{1/2}\}$ (because the intersection of a surface of a lightcone
and a null-plane is a parabola). Together with (i), (iii) for \eqref{ubound}, this already implies that $u|_{x^-=x_0^-}$ together with all of its derivatives 
$\partial_+, \partial_\parallel$ times $(x^+)^N, N\ge 0$ is in $L^2(\RR^{d-1}, dx^+ dx_\parallel)$.
Next, we demonstrate \eqref{Kformula}. Consider a partial Cauchy surface $\Sigma$ consisting of (i) the  null plane $\{x^- = x^-_0, x^+ < x^+_0\}$, 
(ii) the null plane (dashed blue line in fig. \ref{fig:nullplane}) $\{ x^+ = x^+_0, x^- \le x^-_0\}$ where $x^+_0$ is chose so large that every future directed causal curve starting from $O$
must intersect $\Sigma$ (precisely once). Consider a smooth function $\chi$ such that $\chi=1$ in the causal past of $\Sigma$ and such that $\chi=0$ for all points slightly towards 
the future of $\Sigma$ (say, as measured in the $x^0$ coordinate).  Then it follows that \cite{wald2} $F-(-\square+m^2)(\chi EF)$ is  in $(-\square+m^2)C^\infty_0(\RR^d)$. Such functions are annihlated by $K$. Furthermore,  $(-\square+m^2)(\chi EF)$ is supported in the neighborhood of $\Sigma$ 
where $\chi$ is not constant because of $(-\square+m^2)(EF)=0$. Thus
\ben
(KF)(p_+, p_\parallel) = \int\limits_{\RR^{d}} [-2(\partial^\mu \chi)\partial_\mu  - (\square \chi) ]EF(x) \, e^{ipx}|_{p_-=(|p_\parallel|^2+m^2)/(2p_+)} \, dx ,
\een
and this integral is effectively over this neighborhood of $\Sigma$. If we let $\chi$ approach the step 
function which is $1$ in the causal future of $\Sigma$ and $0$ elsewhere, 
then $\partial_\mu \chi \to -n_\mu \delta_\Sigma$, where $\delta_\Sigma$ is the Dirac-distribution concentrated at $\Sigma$, and 
$n^\mu$ is the future-directed normal to $\Sigma$. Decomposing the integral into the contributions from (i) and (ii) gives in this limit
\ben
\label{Kformula2}
\begin{split}
(KF)(p_+, p_\parallel) =& -2e^{ip_- x^-_0} \int\limits_{\RR^{d-2}} \int\limits_{-\infty}^{x_0^+} \partial_+ EF|_{x^-=x_0^-} \, e^{ip_+ x^+ + ix_\parallel p_\parallel} \, dx^+ dx_\parallel \\
&  -2e^{ip_+ x^+_0} \int\limits_{\RR^{d-2}} \int\limits_{-c}^{x_0^-} \partial_- EF|_{x^+=x_0^+} \, e^{ip_- x^- + ix_\parallel p_\parallel} \, dx^- dx_\parallel ,
\end{split}
\een
where $c$ is a sufficiently large constant depending only on $O$, but not $x^+_0$, 
due to the causal support of $E$. The second integral vanishes by (i), (iii), (iv) for \eqref{ubound} when $x^+_0 \to \infty$.
\end{proof}

\end{document}